\def\@seccntformat#1{\@ifundefined{#1@cntformat}%
   {\csname the#1\endcsname\quad}  
   {\csname #1@cntformat\endcsname}
}
\let\oldappendix\appendix 
\renewcommand\appendix{%
    \oldappendix
    \newcommand{\section@cntformat}{\appendixname~\thesection:\,\,}
}
\newcommand{\Lyx}{L\kern-.1667em\lower.25em\hbox{y}\kern-.125emX\spacefactor1000}
\newtheorem{theorem}{Theorem}
\newtheorem{proposition}[theorem]{Proposition}
\newcommand{\E}{\mathbb{E}}
\newcommand{\V}{\mathbb{V}\mathrm{ar}}
\newcommand{\erf}{\mathrm{erf}}
\newcommand{\erfc}{\mathrm{erfc}}
\DeclareMathOperator{\diag}{diag}
\newcommand{\norm}[1]{\ensuremath{ \left\Vert #1 \right\Vert }}
\newcommand{\overbar}[1]{\mkern 1.5mu\overline{\mkern-1.5mu#1\mkern-1.5mu}\mkern 1.5mu}
\def\prehp(#1,#2){\ensuremath{  #1 \cdot #2 }}
\newcommand{\Cag}[1]{\textcolor{blue}{#1}}
\begin{document}

\title{
\textbf{Non-parametric cumulants approach for outlier detection of multivariate financial data}}
\author{Francesco Cesarone$^1$, Rosella Giacometti$^2$, Jacopo Maria Ricci$^3$\\
	{\small $^1$\emph{Roma Tre University  - Department of Business Studies}}\\
	{\footnotesize francesco.cesarone@uniroma3.it}\\
	{\small $^2$ \emph{University of Bergamo - Department of Management}}\\
	{\footnotesize rosella.giacometti@unibg.it} \\
	{\small $^2$ \emph{University of Bergamo - Department of Economics}}\\
	{\footnotesize jacopomaria.ricci@unibg.it} \\
}
\maketitle

\begin{abstract}

In this paper, we propose an outlier detection algorithm for multivariate data based on their projections on the directions that maximize the Cumulant Generating Function (CGF).
We prove that CGF is a convex function, and we characterize the CGF maximization problem on the unit $n$-circle as a concave minimization problem.
Then, we show that the CGF maximization approach can be interpreted as an extension of the standard principal component technique.
Therefore, for validation and testing, we provide a thorough comparison of our methodology with two other projection-based approaches both on artificial and real-world financial data.
Finally, we apply our method as an early detector for financial crises.

\bigskip \noindent \textbf{Keywords}: Outlier detection; Cumulant generating function; Principal component analysis; Projections; Multivariate skew-normal distribution.
\end{abstract}

\section{Introduction}

Outlier detection issue has become increasingly important over the years and, as of now, its fields of application range from medicine and engineering to finance \citep[see, e.g.,][]{singh2012outlier,meng2019overview}.
As for the latter, outliers can be the consequence of human error or fraudulent activities; similarly, financial crises can be viewed as anomalies since markets experience atypical behaviors in those periods \citep[see, e.g.,][]{ane2008robust}.
Furthermore, a few outliers can strongly influence the results of an experiment. This can be observed in Portfolio Optimization, where some portfolio selection models can be highly sensitive to changes in input data \citep[see, e.g.,][]{Kondor2007,cesarone2020stability, giacometti2021tail}.
Because of this widespread practical relevance, many authors tackled this topic. Hence, the theory behind anomaly detection has unsurprisingly evolved, from the first studies which dealt with more simple instances, i.e., univariate Gaussian data, to more complex cases, such as multivariate data following non-parametric distributions.
We can mention the work of \cite{ferguson1961rejection}, who considers univariate normal samples and identifies outliers as data with mean slippage. \cite{wilks1963multivariate} proposes a method to identify outliers in a multivariate normal distribution with unknown parameters.
\cite{gnanadesikan1972robust} propose outlier detection methods on multivariate data, based on their projection onto the directions corresponding to the principal components obtained by the standard Principal Component Analysis (PCA).
\cite{schwager1982detection} extend the work of \cite{ferguson1961rejection} to multivariate normal data. \cite{reed1990adaptive} identify outliers by using the so-called RX detector.
Such an indicator measures the location of multivariate data points in the dispersion ellipsoid by means of the Mahalanobis distance. The authors assume that both ordinary and outlier data follow a multivariate normal distribution with same covariance matrix but different mean vectors.
Note that, even though \cite{reed1990adaptive} assume the Gaussian distribution, other authors used the RX detector, removing such a strong distributional hypothesis.
For example, \cite{das1986detection} and \cite{sinha1984detection} extend the work by \cite{schwager1982detection} to (nonnormal) elliptically symmetric distributions, and use the Mahalanobis distance as a detector.
Since outliers affect both location and scale parameters, one way to overcome this issue is to use robust estimates.
For example, \cite{maronna1976robust} proposes robust M-estimators of multivariate mean vectors and covariance matrices. Both \cite{stahel1981breakdown} and \cite{donoho1982breakdown} define, in their respective works, an affine equivariant robust estimator for location and scatter.
Another popular method consists in computing the minimal covariance determinant \citep[see, e.g.,][]{rousseeuw1984least, rousseeuw1985multivariate}.
%

%
%
%
%
%

\noindent
Especially when dealing with high-dimensional multivariate data, many techniques aim to find outliers in the univariate projections of such data to reduce the computational effort.
For this reason, several studies have been devoted to identifying the ``best'' directions in which data must be projected to represent the variability of data most effectively.
In \cite{pena2001multivariate, pena2007combining}, these directions correspond to those that maximize and minimize the kurtosis of the projected data. \cite{domino2020multivariate} generalizes the Prieto and Pe\~na's approach by
choosing the direction that maximizes the fourth cumulant.
Following this stream of literature, our work aims at detecting outliers,  by projecting the data onto the direction that maximizes the Cumulant Generating Function (CGF).
%

\noindent
In this paper, we refine some theoretical results of the methodologies proposed by \cite{bernacchia2008detecting1} and \cite{bernacchia2008detecting2}. More precisely, we prove that CGF is a convex function, and, then, we characterize the CGF maximization problem on the unit $n$-circle as a concave minimization problem.
Then, we extend the outlier detection methodology, based on the projections of multivariate data on the directions obtained by the classical PCA technique, to the directions that maximize CGF.
Finally, we perform an extensive empirical analysis both on simulated and historical data, comparing our method with those described in \cite{pena2001multivariate, pena2007combining} and \cite{domino2020multivariate}.

The paper is organized as follows.
In Section \ref{sec:cumulantsoptdir}, we first introduce some preliminary concepts about the moment generating function, the cumulant generating function, and cumulants for univariate and multivariate random variables.
Then, we report and refine some results developed in \cite{bernacchia2008detecting1} and \cite{bernacchia2008detecting2} for a generalization of the principal component analysis (PCA) technique.
In Section \ref{sec:outdet}, we describe our outlier detection algorithm, and the methods used to comparative purposes.
For validation and testing, in Section \ref{sec:expana}, we present a thorough comparison of these methodologies
both on artificial and real-world data.
The real-world data application identifies the outliers as the materialization of financial crises.
Finally, Section \ref{sec:conc} contains some concluding remarks.

\section{Theoretical framework}\label{sec:cumulantsoptdir}

For the sake of completeness and readability, we recall below some notions about the moment generating function, the cumulant generating function, and the cumulants in the case both of a univariate and a multivariate random variable. Furthermore, we also report and refine some concepts introduced by \cite{bernacchia2008detecting1} and \cite{bernacchia2008detecting2} for a generalization of the principal component analysis (PCA) technique.

\noindent
Let $X$ be a univariate random variable.
If $\E[e^{\xi X}]$ exists and is finite $\forall \xi \in \mathbb{R}$, then the moment generating function of $X$ is defined as follows
\begin{equation}
	\label{uniMGF}
		M_{X}(\xi)=\E[e^{\xi X}]=\sum_{m=0}^{+\infty}\mu_{m}\frac{\xi^{m}}{m!}
\end{equation}
where $\mu_{m}=\E[X^{m}]$ is the $m^{th}$ raw moment of $X$, that can be obtained by differentiating $m$ times w.r.t. $\xi$ and setting $\xi=0$,
namely $\mu_{m}=M_{X}^{(m)}(0)$.
The cumulant generating function (CGF) of $X$ can be expressed as the logarithm of \eqref{uniMGF}
\begin{equation}
G_{X}(\xi)=\ln \E[e^{\xi X}]=\sum_{m=1}^{+\infty}k_{m}\frac{\xi^{m}}{m!},
\label{CGFUni}
\end{equation}
where $k_{m}$ is the $m^{th}$ order cumulant of $X$.
It is straightforward to see that $k_{m}$ can be obtained by differentiating Expression \eqref{CGFUni} $m$ times and setting $\xi=0$, namely
$k_{m}=G_{X}^{(m)}(0)$.
Furthermore, the cumulants can be expressed as functions of the moments.
For instance, the first 4 cumulants of $X$ are as follows
\begin{equation*}
\begin{array}{l}
k_{1}=\E[X]\\
k_{2}=\E[(X-\E[X])^{2}]\\
k_{3}=\E[(X-\E[X])^{3}]\\
k_{4}=\E[(X-\E[X])^{4}]-3(\E[(X-\E[X])^{2}])^{2}\\
\end{array}
\end{equation*}
Note that the first two cumulants correspond to the mean and variance respectively, the third coincides with the third central moment, while from the fourth onward the cumulants are polynomial functions of the central moments with integer coefficients.
From Expression \eqref{CGFUni}, we also observe that for small values of $\xi$ the cumulant generating function of $X$ is essentially determined by its variance (if $k_{1}=0$), whereas for large $\xi$ the contributions of the higher order cumulants become dominant.
Furthermore, if $X\sim N(\mu, \sigma^{2})$, then
$G_{X}(\xi)= \mu \xi + \sigma^{2} \displaystyle \frac{\xi^{2}}{2}$ since $k_{m}=0$ for $m>2$.
	
\noindent
In the case of a multivariate random variable $\boldsymbol{X}=(X_{1},\ldots, X_{n})$, its CGF becomes
	\begin{equation}
	G_{\boldsymbol{X}}(\boldsymbol{\xi})=\ln \E[e^{\boldsymbol{\xi}^{T} \boldsymbol{X}}],
	\end{equation}
where $\boldsymbol{\xi}=(\xi_1, \ldots, \xi_n) \in \mathbb{R}^{n}$.
Denoting by
$r=\| \boldsymbol{\xi}\|_{2}$ the euclidian norm of $\boldsymbol{\xi}$ and by $\boldsymbol{\theta}$
the versor of $\boldsymbol{\xi}$, we can write
$\boldsymbol{\xi}=r\boldsymbol{\theta}$, and therefore
the cumulant generating function of $\boldsymbol{X}$
can be expressed as follows
\begin{equation}
\label{multiCGF_theta}
G_{\boldsymbol{X}}(r,\boldsymbol{\theta})=\ln \E[e^{r \boldsymbol{\theta}^{T} \boldsymbol{X}}]=
\sum_{m=1}^{+\infty} k_{m}(\boldsymbol{\theta})\frac{r^{m}}{m!}.
\end{equation}
where
$k_{m}(\boldsymbol{\theta})  = \displaystyle \frac{d^{m} G_{\boldsymbol{X}}(r,\boldsymbol{\theta})}{ d r^{m}} \biggl \lvert_{r = 0}$ is the $m^{th}$ cumulant of $\boldsymbol{X}$ projected along the direction $\boldsymbol{\theta}$.

\noindent
Note that the length $r$ of $\boldsymbol{\xi}$ plays the same rule of $\xi$ in the univariate case, namely
if $r$ is small then the information on $\boldsymbol{X}$ through CGF is basically represented by the covariance matrix of $\boldsymbol{X}$ (if data are centered, i.e., $\E[X_j]=0$ with $j=1, \ldots, n$).
Whereas if $r$ is large then the information about $\boldsymbol{X}$ described by its CGF predominantly depends on the higher-order cumulants.

\subsection{PCA through CGF}\label{sec:PCA_CGF}

Following the work of \cite{bernacchia2008detecting1}, we look for identifying the largest principal components, namely the directions that provide most of the variability present in the original data, by exploiting the information contained in the Cumulant Generating Function (CGF).

\noindent
Using the classical PCA technique the principal component with the largest eigenvalue is the versor maximizing the variance, which, as we will show in the next section, also coincides with the optimal versor maximizing CGF \eqref{multiCGF_theta} in the case of multivariate normal random variables, or in the case of generic random vectors but for small values of $r$.
As mentioned by \cite{bernacchia2008detecting1}, the CGF maximization aims to find the directions with the largest variability of the multivariate distribution not only through the first two cumulants, but also through the higher-order ones, which become dominant especially for data points that deviate widely from the mean of the random vector.
Hence,
for a fixed $r$, our goal is to find the directions that maximize \eqref{multiCGF_theta}, i.e.
\begin{equation}\label{MaxCGF}
\begin{array}{lll}
\displaystyle\max_{\boldsymbol{\theta}} & G_{\boldsymbol{X}}(r,\boldsymbol{\theta}) \\
\mbox{s.t.}& &\\
& \boldsymbol{\theta}^{T}\boldsymbol{\theta}=1 &
\end{array}
\end{equation}
As we will show in Section \ref{theta_gen}, Problem \eqref{MaxCGF} is a constrained concave programming problem,
where the main difficulty lies in the fact that concave problems normally have many local maxima points.
\noindent
In the next section, we present our approach on two artificial multivariate random vectors, normal and skew-normal.

\subsection{Some special cases: multivariate normal and skew-normal}

To better understand the rationale behind the approach of \cite{bernacchia2008detecting1}, we report the CGF maximization procedure in the case of multivariate normal (Section \ref{sec:Normal}) and skew-normal (Section \ref{sec:Skew-Normal}) random variables, while in Section \ref{theta_gen} we address the general case.
We show that, in the case of a multivariate normal random vector, the optimal versor maximizing CGF collapses to the first principal component of the PCA.
For a skew-normal, we will see that the optimal solution obtained from Problem \eqref{MaxCGF} tends to the first principal component when the radius of the hypersphere is small, while, in the case of a large radius, the direction of the optimal versor deviates from the first component toward the direction where the multivariate distribution presents a fatter tail.

\noindent
Without loss of generality, we henceforth assume that the data are centered around their mean.


\subsubsection{Multivariate normal case}\label{sec:Normal}
		
Let $\boldsymbol{X} \sim N(\boldsymbol{0}, \boldsymbol{\Sigma})$, where, therefore, $\boldsymbol{X}$ is mean-centered and $\boldsymbol{\Sigma}$ is its covariance matrix.
Thus, the cumulant generating function of $\boldsymbol{X}$ is
	\begin{equation*}
	G_{\boldsymbol{X}}(\xi)=\frac{1}{2}\boldsymbol{\xi}^{T}\boldsymbol{\Sigma} \boldsymbol{\xi} \,.
	\end{equation*}
Indeed, similar to the univariate case, for normal random vectors, the first two cumulants are exactly the mean and variance,
while the higher-order cumulants are all equal to zero.
Then, Problem \eqref{MaxCGF} can be rewritten as
	\begin{equation}\label{MaxCGFNorm}
	\begin{array}{lll}
	\displaystyle\max_{\boldsymbol{\theta}} & \displaystyle\frac{r^{2}}{2}\boldsymbol{\theta}^{T}\boldsymbol{\Sigma} \boldsymbol{\theta}& \\
	\mbox{s.t.}& &\\
	& \boldsymbol{\theta}^{T}\boldsymbol{\theta}=1 &
	\end{array}
	\end{equation}
This problem can be easily solved by means of the Lagrange multipliers method, where the Lagrangian function is
$L(\lambda, \boldsymbol{\theta})=\displaystyle \frac{r^{2}}{2}\boldsymbol{\theta}^{T}\boldsymbol{\Sigma} \boldsymbol{\theta}-\lambda(\boldsymbol{\theta}^{T}\boldsymbol{\theta}-1)$.
Hence, the corresponding first order conditions are
$\nabla_{\lambda} L(\lambda, \boldsymbol{\theta}) = \boldsymbol{\theta}^{T} \boldsymbol{\theta} - 1 = 0$ and
$\nabla_{\boldsymbol{\theta}} L(\lambda, \boldsymbol{\theta})=r^{2}\boldsymbol{\Sigma}\boldsymbol{\theta}-2\lambda\boldsymbol{\theta}=0$.
Problem \eqref{MaxCGFNorm} has two optimal solutions, the versors $\boldsymbol{\hat{\theta}_1}$ and $\boldsymbol{\hat{\theta}_2}=-\boldsymbol{\hat{\theta}_1}$, which are both eigenvectors of $\boldsymbol{\Sigma}$.
Note that applying the classical PCA technique, the first principal components can be obtained by solving the following problem
		\begin{equation}\label{MaxSigma}
		\begin{array}{lll}
		\displaystyle\max_{\boldsymbol{\theta}} & \boldsymbol{\theta}^{T}\boldsymbol{\Sigma}\boldsymbol{\theta} \\
		\mbox{s.t.}& &\\
		& \boldsymbol{\theta}^{T}\boldsymbol{\theta}=1 &
		\end{array}
		\end{equation}
Since Problems \eqref{MaxCGFNorm} and \eqref{MaxSigma} are equivalent except for the parameter $\displaystyle\frac{r^{2}}{2}$ in the objective function of \eqref{MaxCGFNorm}, these problems lead to the same optimal solutions.
\begin{figure}[htbp]
\centering
\includegraphics[width=0.7\textwidth]{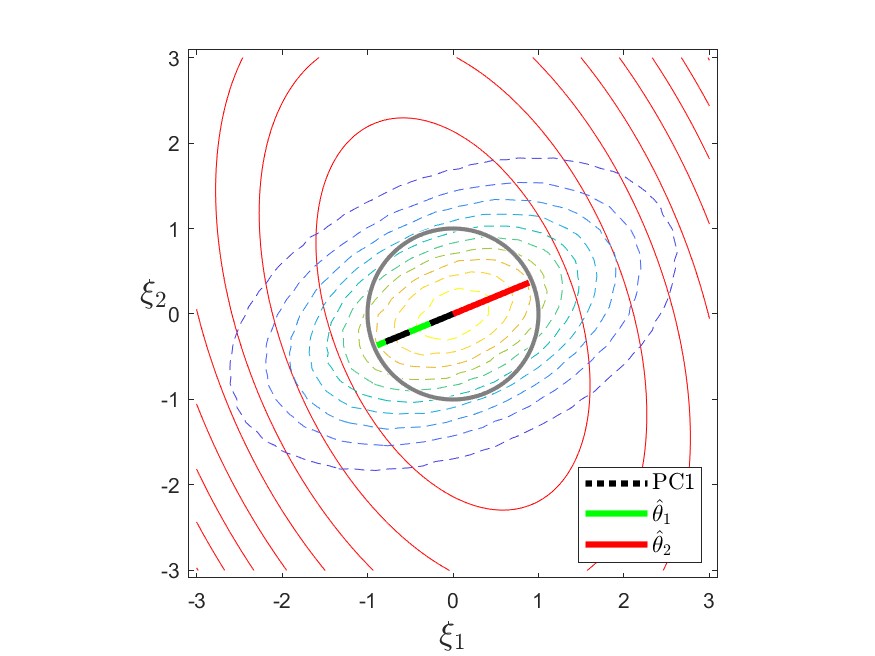}
\caption{Bivariate normal random variable: the dashed lines represent its isoprobability contours of $\boldsymbol{X}$, the red solid lines are the isocurves of
$G_{\boldsymbol{X}}(r, \boldsymbol{\theta}) = \frac{1}{2} \boldsymbol{\xi}'\boldsymbol{\Sigma} \boldsymbol{\xi}$,
and the gray solid line is the circle of radius 1.}
\label{fig:isopbinorm}
\end{figure}

\noindent
In Fig. \ref{fig:isopbinorm}, we give a graphical representation of the CGF maximization procedure in the case of a bivariate normal random variable.
The red solid lines are the isocurves of $G_{\boldsymbol{X}}(\boldsymbol{\xi})=\frac{1}{2} \boldsymbol{\xi}^{T}\boldsymbol{\Sigma} \boldsymbol{\xi}$,
the gray solid line is the ball of radius 1, and the dashed lines represent the isoprobability contours of $\boldsymbol{X} \sim N(\boldsymbol{0}, \boldsymbol{\Sigma})$.
The points, where the (red solid) isocurves and the (gray solid) cicle of radius 1 are tangent, represent the optimal solutions of Problem \eqref{MaxCGFNorm}, which coincide (except for a minus sign) with the first principal component lying on the major axis of the (dashed) local-dispersion ellipsoids, i.e., the direction of maximum variability of the bivariate random variable.
We point out that the fact that $\boldsymbol{\hat{\theta}_1}$ and $\boldsymbol{\hat{\theta}_2}$ lie on the same direction
is due to the symmetry of the bivariate normal distribution.
Indeed, as we will see in the next section, in the case of a non-symmetric distribution $\boldsymbol{\hat{\theta}_1}$ and $\boldsymbol{\hat{\theta}_2}$ are in general on different directions, and these directions depend on $r$.
Clearly, this phenomenon will be more evident for large $r$ and for highly skewed distributions.

\subsubsection{Multivariate skew-normal case}\label{sec:Skew-Normal}

We report here the implementation of the CGF maximization procedure in the case of a multivariate skew-normal random variable.

\noindent
Let $\boldsymbol{X} \sim SN(\boldsymbol{\eta}, \boldsymbol{\Sigma}, \boldsymbol{\alpha})$, where $\boldsymbol{\eta}$, $\boldsymbol{\Sigma}$ and $\boldsymbol{\alpha}$ are the location, the scale and the shape parameters, respectively.
As discussed in \cite{azzalini1996multivariate}, the probability density function (pdf) of $\boldsymbol{X}$ is
\begin{equation}\label{eq:pdfSN}
f_{\boldsymbol{X}}(\boldsymbol{x})=2 \phi_{n}(\boldsymbol{x}-\boldsymbol{\eta}; \boldsymbol{\Sigma})
\Phi(\boldsymbol{\alpha}^{T}\diag(\boldsymbol{\sigma})^{-1}\bigl(\boldsymbol{x}-\boldsymbol{\eta}\bigr)) \quad \mbox{with}
\quad \boldsymbol{x} \in \mathbb{R}^{n},
\end{equation}
where $\phi_{n}(\boldsymbol{x}-\boldsymbol{\eta}; \boldsymbol{\Sigma})$ is the pdf of an $n$-variate Gaussian random variable with mean $\boldsymbol{\eta}$ and covariance matrix $\boldsymbol{\Sigma}$, $\Phi(\cdot)$ is the cumulative distribution function (cdf) of a univariate standard normal random variable, $\boldsymbol{\alpha}$ is the degree of skewness (when $\boldsymbol{\alpha}=\boldsymbol{0}$, \eqref{eq:pdfSN} collapses to the normal case), and $\diag(\boldsymbol{\sigma})=\diag(\sigma_{1},\ldots,\sigma_{n})$, i.e., it represents the diagonal matrix of the standard deviations.
As shown in \cite{arellano2008centred} and \cite{azzalini1999statistical}, we have
\begin{equation}\label{eq:ExpSN}
  \boldsymbol{\mu_X} = \mathbb{E}[\boldsymbol{X}] = \boldsymbol{\eta} + \diag(\boldsymbol{\sigma}) \boldsymbol{\delta}
\end{equation}
\begin{equation}\label{eq:VarSN}
  \boldsymbol{\Sigma_X} = \mathbb{V}\mathrm{ar}[\boldsymbol{X}] = \boldsymbol{\Sigma} - \diag(\boldsymbol{\sigma}) \boldsymbol{\delta} \boldsymbol{\delta}^{T} \diag(\boldsymbol{\sigma})
\end{equation}
\begin{equation}\label{eq:CGF_SN}
  G_{\boldsymbol{X}}(\boldsymbol{\xi})=\ln \E[e^{\boldsymbol{\xi}^{T} \boldsymbol{X}}]  = \boldsymbol{\xi}^{T} \boldsymbol{\eta} + \frac{1}{2} \boldsymbol{\xi}^{T} \boldsymbol{\Sigma} \boldsymbol{\xi} + \ln [2 \Phi(\sqrt{\frac{\pi}{2}} \boldsymbol{\delta}^{T} \diag(\boldsymbol{\sigma})  \boldsymbol{\xi}] \, ,
\end{equation}
where $\boldsymbol{\delta} = \displaystyle\frac{1}{\sqrt{ \frac{\pi}{2} (1 + \boldsymbol{\alpha}^{T} \boldsymbol{C} \boldsymbol{\alpha})}} \boldsymbol{C} \boldsymbol{\alpha}$, and $\boldsymbol{C} = \diag(\boldsymbol{\sigma})^{-1} \boldsymbol{\Sigma} \diag(\boldsymbol{\sigma})^{-1}$.

\noindent
Since in this framework we work with a centered variable around its mean $\overbar{\boldsymbol{X}} = \boldsymbol{X} - \boldsymbol{\mu_X}$,
its CGF becomes
\begin{equation}
G_{\overbar{\boldsymbol{X}}}(\boldsymbol{\xi})= - \boldsymbol{\xi}^{T} \boldsymbol{\mu_X} + \boldsymbol{\xi}^{T} \boldsymbol{\eta} + \frac{1}{2} \boldsymbol{\xi}^{T} \boldsymbol{\Sigma} \boldsymbol{\xi} + \ln [2 \Phi(\sqrt{\frac{\pi}{2}} \boldsymbol{\delta}^{T} \diag(\boldsymbol{\sigma})  \boldsymbol{\xi}] \, .
\label{G_s_SN_a}
\end{equation}
Exploiting Expression \eqref{eq:ExpSN} and denoting $\boldsymbol{\hat{\mu}} = \boldsymbol{\mu_X} - \boldsymbol{\eta}$ and $\boldsymbol{\xi}=r\boldsymbol{\theta}$,
we can write
\begin{equation}
G_{\overbar{\boldsymbol{X}}}(r, \boldsymbol{\theta}) = - r \boldsymbol{\theta}^{T} \boldsymbol{\hat{\mu}} + \frac{r^2}{2} \boldsymbol{\theta}^{T} \boldsymbol{\Sigma} \boldsymbol{\theta} + \ln [ 2 \Phi(\sqrt{\frac{\pi}{2}} r \boldsymbol{\hat{\mu}}^{T}  \boldsymbol{\theta})],
\label{G_s_SN}
\end{equation}
For small $r$, using the Taylor expansion of $\ln (1 + z) = z - \frac{z^2}{2} + \ldots$ and
$2 \Phi(z) = 1+ \erf(\frac{z}{\sqrt{2}})= 1 + \sqrt{\frac{2}{\pi}} z + \ldots$,
we have $\ln [ 2 \Phi(\sqrt{\frac{\pi}{2}} r \boldsymbol{\hat{\mu}}^{T}  \boldsymbol{\theta})] = \ln (1+ \erf(\sqrt{\frac{\pi}{2}} r \boldsymbol{\hat{\mu}}^{T}  \boldsymbol{\theta})) \simeq \ln (1 + \sqrt{\frac{2}{\pi}} \sqrt{\frac{\pi}{2}} r \boldsymbol{\hat{\mu}}^{T}  \boldsymbol{\theta}) \simeq r \boldsymbol{\hat{\mu}}^{T}  \boldsymbol{\theta} - \frac{1}{2} r^2 \boldsymbol{\theta}^{T} \boldsymbol{\hat{\mu}} \boldsymbol{\hat{\mu}}^{T}  \boldsymbol{\theta}$.
Therefore, Expression \eqref{G_s_SN} can be approximated as follows
\begin{equation}
G_{\overbar{\boldsymbol{X}}}(r, \boldsymbol{\theta}) \simeq \frac{r^2}{2} \boldsymbol{\theta}^{T} \left( \boldsymbol{\Sigma} - \boldsymbol{\hat{\mu}} \boldsymbol{\hat{\mu}}^{T} \right) \boldsymbol{\theta},
\end{equation}
where $\boldsymbol{\Sigma} - \boldsymbol{\hat{\mu}} \boldsymbol{\hat{\mu}}^{T}$ is exactly the covariance matrix of the skew-normal distribution $\boldsymbol{\Sigma_X}$
as in \eqref{eq:VarSN}.
Hence, for small $r$, the optimal solution obtained from Problem \eqref{MaxCGF} collapses to the first principal component,
namely $\boldsymbol{\hat{\theta}_1}=PC1$ and $\boldsymbol{\hat{\theta}_2}=-PC1$.
Fig. \ref{fig:isopbiskewnormsmallr} exhibits a numerical example of the CGF maximization procedure in the case of a bivariate skew-normal random variable,
when $r$ is small.
	\begin{figure}[htbp]
\centering
		\includegraphics[width=0.7\textwidth]{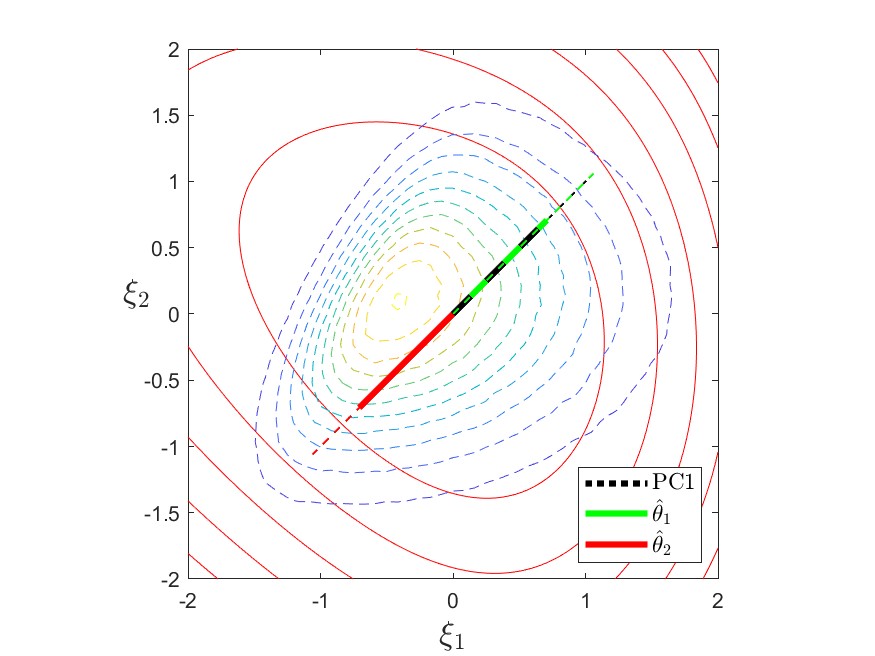}
		\caption{Bivariate skew-normal random variable: the dashed lines represent its isoprobability contours, the red solid lines are the isocurves of
			$G_{\boldsymbol{X}}(r, \boldsymbol{\theta})=-r\boldsymbol{\mu}^{T}\boldsymbol{\theta}+\frac{r}{2}\boldsymbol{\theta}^{T}\boldsymbol{\Sigma}\boldsymbol{\theta}+\ln [2\Phi(\sqrt{\frac{\pi}{2}}r\boldsymbol{\mu}^{T}\boldsymbol{\theta})]$, for small $r$.}
		\label{fig:isopbiskewnormsmallr}
	\end{figure}

\noindent
For large $r$, we can examine the behavior of Eq. \eqref{G_s_SN} using the asymptotic expansion of the error function \citep[see, e.g.,][]{copson2004asymptotic} and distinguishing the case where $\boldsymbol{\hat{\mu}}^{T}  \boldsymbol{\theta} \geq 0$ and $\boldsymbol{\hat{\mu}}^{T}  \boldsymbol{\theta}<0$.
For this aim, we consider the following asymptotic expansion of the complementary error function for large real $z$
\begin{equation}\label{eq:AsymErr}
 \erfc(z)= \frac{e^{-z^{2}}}{z {\sqrt {\pi }}} \sum _{n=0}^{\infty }(-1)^{n}{\frac {(2n-1)!!}{\left(2 z^{2}\right)^{n}}} =
 \frac{e^{-z^{2}}}{z {\sqrt {\pi }}} \left( 1 - \frac{1}{2 z^2} + \frac{3}{4 z^4} + \ldots \right)
\end{equation}
where $(2n-1)!!=(2n-1)\cdot (2n-3) \cdots 3\cdot 1$.
Therefore, for $z \rightarrow + \infty$
\begin{equation}\label{eq:AsymErr2}
 \erf(z) = 1 - \erfc(z) \simeq  1 - \frac{e^{-z^{2}}}{z {\sqrt {\pi }}} + \frac{e^{-z^{2}}}{2 z^3 {\sqrt {\pi }}}
\end{equation}
This means that
\begin{equation}\label{eq:AsymErr3}
2 \Phi(z) = 1+ \erf\left(\frac{z}{\sqrt{2}}\right) \xrightarrow[z \to + \infty]{} 2
\end{equation}
Similarly, we obtain
\begin{equation}\label{eq:AsymErr3}
2 \Phi(-z) = 1+ \erf\left(\frac{-z}{\sqrt{2}}\right) = \erfc\left(\frac{-z}{\sqrt{2}}\right) \xrightarrow[z \to + \infty]{} - \frac{\sqrt {2} e^{-\frac{z^{2}}{2}}}{z {\sqrt {\pi }}}
\end{equation}
\begin{equation}\label{eq:AsymErr4}
\ln [ 2 \Phi(-z)] = \ln \left[ 1+ \erf\left(\frac{-z}{\sqrt{2}}\right)\right] = \ln \left[ \erfc\left(\frac{-z}{\sqrt{2}}\right) \right] \simeq \ln \left[ - \frac{\sqrt {2} e^{-\frac{z^{2}}{2}}}{z {\sqrt {\pi }}} \right] \simeq - \frac{z^{2}}{2}
\end{equation}
Then, examining Eq. \eqref{G_s_SN} for $\boldsymbol{\hat{\mu}}^{T}  \boldsymbol{\theta} \geq 0$ and large $r$, we have
\begin{equation}
G_{\overbar{\boldsymbol{X}}}(r, \boldsymbol{\theta}) \simeq \frac{r^2}{2} \boldsymbol{\theta}^{T} \boldsymbol{\Sigma} \boldsymbol{\theta} \, ,
\label{G_s_SN_A}
\end{equation}
while, for $\boldsymbol{\hat{\mu}}^{T}  \boldsymbol{\theta} < 0$ and large $r$,
we can write
\begin{equation}
G_{\overbar{\boldsymbol{X}}}(r, \boldsymbol{\theta}) \simeq - r \boldsymbol{\theta}^{T} \boldsymbol{\hat{\mu}} + \frac{r^2}{2} \boldsymbol{\theta}^{T} \boldsymbol{\Sigma} \boldsymbol{\theta} - \frac{1}{2} \frac{\pi}{2} r^2 \boldsymbol{\theta}^{T} \boldsymbol{\hat{\mu}} \boldsymbol{\hat{\mu}}^{T}  \boldsymbol{\theta}
\simeq \frac{r^{2}}{2}\boldsymbol{\theta}^{T}(\boldsymbol{\Sigma}-\frac{\pi}{2}\boldsymbol{\hat{\mu}}^{T}\boldsymbol{\hat{\mu}}) \boldsymbol{\theta}
\label{G_s_SN_B}
\end{equation}
In Fig. \ref{fig:isopbiskewnormlarger}, we show a graphical representation of the CGF maximization procedure in the case of a bivariate skew-normal random variable.
The red solid lines are the isocurves of $G_{\boldsymbol{X}}(r, \boldsymbol{\theta})$,
and the dashed lines represent the isoprobability contours of $\boldsymbol{X} \sim SN(\boldsymbol{0}, \boldsymbol{\Sigma})$, where the diagonal entries of $\boldsymbol{\Sigma}$ are 1.2 and 0.5143, while the other entries are equal to 0, and $\boldsymbol{\alpha}=(4.365, -1.455)$.

			\begin{figure}[htbp]
\centering
				\includegraphics[width=0.7\textwidth]{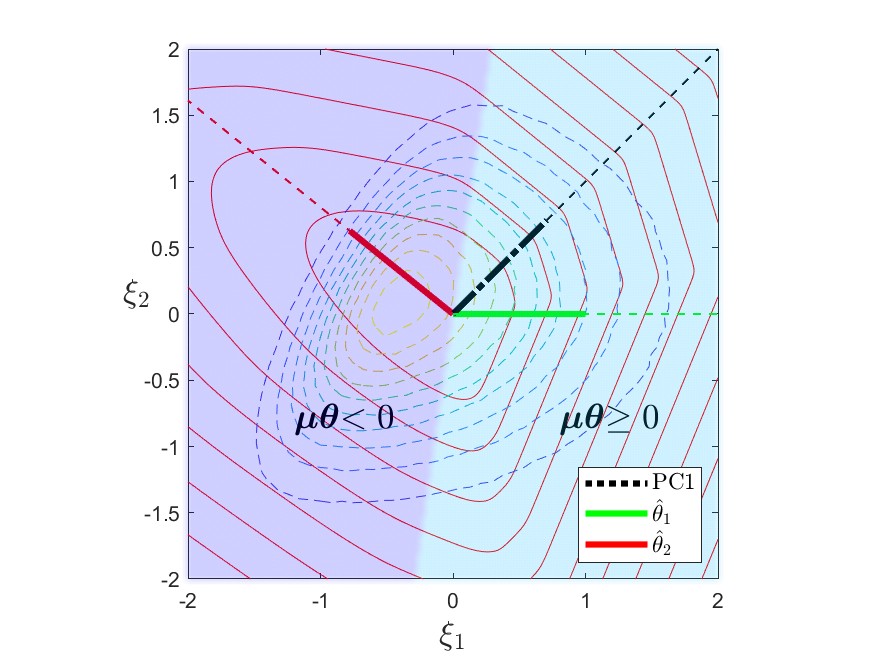}
				\caption{Bivariate skew-normal random variable: the dashed lines represent its isoprobability contours, the red solid lines are the isocurves of
					$G_{\boldsymbol{X}}(r, \boldsymbol{\theta})=-r\boldsymbol{\mu}^{T}\boldsymbol{\theta}+\frac{r}{2}\boldsymbol{\theta}^{T}\boldsymbol{\Sigma}\boldsymbol{\theta}+\ln [2\Phi(\sqrt{\frac{\pi}{2}}r\boldsymbol{\mu}^{T}\boldsymbol{\theta})]$, for large $r$.}
				\label{fig:isopbiskewnormlarger}
			\end{figure}
	
\subsection{Maximizing the non-parametric Cumulant Generating Function}\label{theta_gen}

In this section we address Problem \eqref{MaxCGF} in the case of a generic discrete multivariate random variable $\boldsymbol{X}_{t} = (X_{1,t},\ldots, X_{n,t})$ defined on a discrete state space, where we assume $T$ states of nature, each with probability $\pi_t$ with $t = 1, \ldots , T$.
Therefore, to find the optimal direction $\widehat{\boldsymbol{\theta}}$ that maximizes the non-parametric Cumulant Generating Function, we solve the following optimization problem
\begin{equation}\label{MaxCGFdiscr}
\begin{array}{lll}
\displaystyle\max_{\boldsymbol{\theta}}& \displaystyle G_{\boldsymbol{X}}(r,\boldsymbol{\theta})= \ln\biggl(\sum_{t=1}^{T} \pi_t e^{r \boldsymbol{\theta}^{T} \boldsymbol{X}_{t}}\biggr) \\
\mbox{s.t.}& &\\
& \boldsymbol{\theta}^{T}\boldsymbol{\theta}=1 &
\end{array}
\end{equation}
First, in the following theorem we show that the objective function of this problem is convex.
Let us denote, for the sake of notation,
\begin{equation}\label{eq:disCGF}
  g(\boldsymbol{\xi}) = G_{\boldsymbol{X}}(\boldsymbol{\xi})= \ln\biggl(\sum_{t=1}^{T} \pi_t e^{\boldsymbol{\xi}^{T} \boldsymbol{X}_{t}}\biggr) = G_{\boldsymbol{X}}(r,\boldsymbol{\theta}) = \ln\biggl(\sum_{t=1}^{T} \pi_t e^{r \boldsymbol{\theta}^{T} \boldsymbol{X}_{t}}\biggr) \, ,
\end{equation}
where $\boldsymbol{\xi}=r\boldsymbol{\theta}$.
	\begin{theorem}\label{convfun}
	Let	$g(\boldsymbol{\xi})$ be as in \eqref{eq:disCGF}, where $\boldsymbol{\xi} \in \mathbb{R}^{n}$.
Then, $g(\boldsymbol{\xi})$ is a convex function.
	\end{theorem}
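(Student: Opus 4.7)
The plan is to establish convexity directly via the Hessian, since this approach connects naturally to the cumulant interpretation that pervades the paper: the first derivatives of $g$ recover the mean of an exponentially tilted distribution, and the second derivatives recover its covariance, which is automatically positive semi-definite.

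First, I would observe that $g(\boldsymbol{\xi})$ is $C^{\infty}$ on $\mathbb{R}^{n}$, being the composition of $\ln$ with a strictly positive finite sum of exponentials of linear functions. This legitimizes differentiation under the sum and reduces the task to showing $\nabla^{2} g(\boldsymbol{\xi}) \succeq 0$ everywhere.

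Next, I would introduce the tilted weights
\[
p_{t}(\boldsymbol{\xi}) = \frac{\pi_{t}\, e^{\boldsymbol{\xi}^{T}\boldsymbol{X}_{t}}}{\sum_{s=1}^{T} \pi_{s}\, e^{\boldsymbol{\xi}^{T}\boldsymbol{X}_{s}}}, \qquad t = 1,\ldots,T,
\]
which form a probability distribution on $\{1,\ldots,T\}$ since they are non-negative and sum to one. A short computation gives
\[
\nabla g(\boldsymbol{\xi}) = \sum_{t=1}^{T} p_{t}(\boldsymbol{\xi})\, \boldsymbol{X}_{t},
\]
and differentiating once more, after simplification, yields
\[
\nabla^{2} g(\boldsymbol{\xi}) = \sum_{t=1}^{T} p_{t}(\boldsymbol{\xi})\, \boldsymbol{X}_{t} \boldsymbol{X}_{t}^{T} - \left( \sum_{t=1}^{T} p_{t}(\boldsymbol{\xi})\, \boldsymbol{X}_{t} \right)\left( \sum_{t=1}^{T} p_{t}(\boldsymbol{\xi})\, \boldsymbol{X}_{t} \right)^{T}.
\]
The right-hand side is precisely the covariance matrix of the random vector that takes value $\boldsymbol{X}_{t}$ with probability $p_{t}(\boldsymbol{\xi})$.

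Finally, for any $\boldsymbol{v} \in \mathbb{R}^{n}$, by expanding the quadratic form one obtains $\boldsymbol{v}^{T}\nabla^{2} g(\boldsymbol{\xi})\boldsymbol{v} = \sum_{t} p_{t}(\boldsymbol{\xi})(\boldsymbol{v}^{T}\boldsymbol{X}_{t})^{2} - (\sum_{t} p_{t}(\boldsymbol{\xi})\boldsymbol{v}^{T}\boldsymbol{X}_{t})^{2}$, which is non-negative by the Cauchy–Schwarz (or Jensen) inequality applied to the scalar random variable $\boldsymbol{v}^{T}\boldsymbol{X}$ under the tilted law. Hence $\nabla^{2} g(\boldsymbol{\xi}) \succeq 0$ for every $\boldsymbol{\xi}$, so $g$ is convex on $\mathbb{R}^{n}$.

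The main obstacle is purely bookkeeping in the Hessian computation — grouping the cross terms correctly to identify the result as a tilted covariance. A fully algebraic alternative, should one prefer to avoid any calculus, would be to apply Hölder's inequality with conjugate exponents $1/\lambda$ and $1/(1-\lambda)$ to the convex combination $\lambda\boldsymbol{\xi} + (1-\lambda)\boldsymbol{\eta}$ and take logs; this also yields convexity in one line, but the Hessian approach is preferable here because it already supplies the second cumulant and sets up the concave-minimization reformulation of Problem \eqref{MaxCGF} that the authors pursue next.
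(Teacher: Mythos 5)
Your proof is correct, but it takes a genuinely different route from the paper. The paper's proof is the purely algebraic one you mention only in passing at the end: writing $g(\gamma\boldsymbol{\xi}+(1-\gamma)\boldsymbol{\zeta})=\ln\bigl(\sum_{t}u_t^{\gamma}v_t^{1-\gamma}\bigr)$ with $u_t=\pi_t e^{\boldsymbol{\xi}^{T}\boldsymbol{X}_t}$, $v_t=\pi_t e^{\boldsymbol{\zeta}^{T}\boldsymbol{X}_t}$, and applying H\"older's inequality with exponents $p=1/\gamma$, $q=1/(1-\gamma)$ to obtain $g(\gamma\boldsymbol{\xi}+(1-\gamma)\boldsymbol{\zeta})\le\gamma g(\boldsymbol{\xi})+(1-\gamma)g(\boldsymbol{\zeta})$ directly, with no differentiation at all. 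Your main argument instead goes through the Hessian: you identify $\nabla^{2}g(\boldsymbol{\xi})$ as the covariance matrix of $\boldsymbol{X}$ under the exponentially tilted weights $p_t(\boldsymbol{\xi})$ and conclude positive semi-definiteness from Jensen (or Cauchy--Schwarz) applied to the scalar projection $\boldsymbol{v}^{T}\boldsymbol{X}$; the smoothness claim justifying this is immediate here since the sum is finite and strictly positive. Both arguments are complete and valid in this finite discrete setting. The H\"older route is shorter and needs no regularity, which is why the paper uses it; your Hessian route costs a little bookkeeping but buys extra structure --- it exhibits the second directional derivative as the second cumulant of the tilted law, which dovetails with the paper's interpretation of $G_{\boldsymbol{X}}(r,\boldsymbol{\theta})$ through cumulants and with the subsequent concave-programming reformulation, and it shows strict convexity whenever the $\boldsymbol{X}_t$ do not all lie in a common hyperplane through a single point, something the paper's proof does not address.
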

	\begin{proof}
Let $\boldsymbol{\xi}, \boldsymbol{\zeta}\in\mathbb{R}^{n}$, and $\gamma \in [0,1]$.
Thus,
\begin{equation*}
		g(\gamma \boldsymbol{\xi} + (1-\gamma)\boldsymbol{\zeta})= \ln\biggl(\sum_{t=1}^{T} \pi_t e^{(\gamma \boldsymbol{\xi}^{T} +(1-\gamma) \boldsymbol{\zeta}^{T}) \boldsymbol{X}_{t}} \biggr) = \ln \biggl( \sum_{t=1}^{T} \pi_t e^{\gamma \boldsymbol{\xi}^{T} \boldsymbol{X}_{t} } \, \, e^{(1-\gamma) \boldsymbol{\zeta}^{T} \boldsymbol{X}_{t}}\biggr)
\end{equation*}
Now, let $u_{t} = \pi_t e^{\boldsymbol{\xi}^{T} \boldsymbol{X}_{t} }$ and $v_{t} = \pi_t e^{\boldsymbol{\zeta}^{T} \boldsymbol{X}_{t}}$ $\forall t = 1, \ldots , T$, hence
\begin{equation}\label{cgf_uv}
		g(\gamma \boldsymbol{\xi}+(1- \gamma) \boldsymbol{\zeta}) = \ln \biggl(\sum_{t=1}^{T} u_{t}^{\gamma}v_{t}^{1- \gamma}\biggr)
\end{equation}
From H\"{o}lder's inequality, we have
\begin{equation*}
		\sum_{t=1}^{T}\lvert x_{t}y_{t}\rvert\leq \biggl(\sum_{t=1}^{T}\lvert x_{t} \rvert^{p}\biggr)^{\frac{1}{p}}\biggl(\sum_{t=1}^{T}\lvert y_{t} \rvert^{q}\biggr)^{\frac{1}{q}} \, ,
\end{equation*}
where $\boldsymbol{x}, \boldsymbol{y}\in\mathbb{R}^{T}$ and $p,q\in[1, \infty)$, with $\displaystyle \frac{1}{p}+\frac{1}{q}=1$.
		
\noindent
Now, considering $x_{t}=u_{t}^{\gamma}$, $y_{t}=v_{t}^{1-\gamma}$, $\displaystyle \frac{1}{p}=\gamma$ and $\displaystyle \frac{1}{q} = 1-\gamma$, we can apply H\"{o}lder's inequality to Expression \eqref{cgf_uv} as follows
\begin{equation*}
\begin{split}
		g(\gamma \boldsymbol{\xi} + (1-\gamma) \boldsymbol{\zeta}) = \ln \biggl(\sum_{t=1}^{T} u_{t}^{\gamma} v_{t}^{1-\gamma} \biggr) & \leq \ln \biggl( \biggl[ \sum_{t=1}^{T} u_{t}^{\gamma \frac{1}{\gamma}} \biggr]^{\gamma} \biggl[\sum_{t=1}^{T} v_{t}^{1-\gamma \frac{1}{1-\gamma}} \biggr]^{1-\gamma}\biggr)=\\
		& = \ln \biggl( \biggl[ \sum_{t=1}^{T} u_{t} \biggr]^{\gamma} \biggl[ \sum_{t=1}^{T} v_{t} \biggr]^{1-\gamma}\biggr)=\\
		&=\gamma \ln \biggl(\sum_{t=1}^{T} u_{t} \biggr)+(1-\gamma) \ln \biggl( \sum_{t=1}^{T} v_{t} \biggr)=\\
		&=\gamma g(\boldsymbol{\xi}) + (1-\gamma) \,  g(\boldsymbol{\zeta})
\end{split}
\end{equation*}
which completes the proof.
\end{proof}
Therefore, Problem \eqref{MaxCGFdiscr} consists in maximizing a convex function on an $n$-sphere, that is a nonconvex set.
However, thanks to the following proposition we can relax
such an $n$-sphere into an $n$-ball.
\begin{proposition}
  Problem \eqref{MaxCGFdiscr} is equivalent to
\begin{equation}\label{MaxCGFdiscrBall}
\begin{array}{lll}
\displaystyle\max_{\boldsymbol{\theta}}& \displaystyle G_{\boldsymbol{X}}(r,\boldsymbol{\theta})= \ln\biggl(\sum_{t=1}^{T} \pi_t e^{r \boldsymbol{\theta}^{T} \boldsymbol{X}_{t}}\biggr) \\
\mbox{s.t.}& &\\
& \boldsymbol{\theta}^{T}\boldsymbol{\theta} \leq 1 &
\end{array}
\end{equation}
\end{proposition}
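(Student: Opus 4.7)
The plan is to show the two problems have the same optimal value, with an optimizer of the ball problem attainable on the sphere. The inclusion $\{\boldsymbol{\theta}:\boldsymbol{\theta}^{T}\boldsymbol{\theta}=1\}\subset\{\boldsymbol{\theta}:\boldsymbol{\theta}^{T}\boldsymbol{\theta}\leq 1\}$ is immediate, so the optimal value of \eqref{MaxCGFdiscrBall} dominates that of \eqref{MaxCGFdiscr}. The content of the statement is the reverse inequality, which I would obtain by showing that the convex objective cannot attain its maximum strictly inside the ball.

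First I would observe that, for fixed $r$, the map $\boldsymbol{\theta}\mapsto G_{\boldsymbol{X}}(r,\boldsymbol{\theta})=g(r\boldsymbol{\theta})$ is convex as the composition of the convex function $g$ (Theorem~\ref{convfun}) with a linear map. The closed unit ball $B=\{\boldsymbol{\theta}:\boldsymbol{\theta}^{T}\boldsymbol{\theta}\leq 1\}$ is compact and convex, so by continuity of $g$ the problem \eqref{MaxCGFdiscrBall} admits a maximizer $\boldsymbol{\theta}^{*}\in B$.

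The key step is to argue that $\boldsymbol{\theta}^{*}$ can be chosen on the boundary. Suppose $\|\boldsymbol{\theta}^{*}\|_{2}<1$. Pick any unit vector $\boldsymbol{v}$ and form the line $\boldsymbol{\theta}(s)=\boldsymbol{\theta}^{*}+s\boldsymbol{v}$; since $\boldsymbol{\theta}^{*}$ is an interior point of $B$, the line meets the sphere in two points $\boldsymbol{a}=\boldsymbol{\theta}(s_{1})$ and $\boldsymbol{b}=\boldsymbol{\theta}(s_{2})$ with $s_{1}<0<s_{2}$, and $\boldsymbol{\theta}^{*}=\lambda\boldsymbol{a}+(1-\lambda)\boldsymbol{b}$ for $\lambda=s_{2}/(s_{2}-s_{1})\in(0,1)$. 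Convexity of the objective gives
\begin{equation*}
G_{\boldsymbol{X}}(r,\boldsymbol{\theta}^{*})\leq \lambda\, G_{\boldsymbol{X}}(r,\boldsymbol{a})+(1-\lambda)\, G_{\boldsymbol{X}}(r,\boldsymbol{b})\leq \max\{G_{\boldsymbol{X}}(r,\boldsymbol{a}),G_{\boldsymbol{X}}(r,\boldsymbol{b})\},
\end{equation*}
so at least one of $\boldsymbol{a},\boldsymbol{b}$, which lies on the unit sphere, attains a value no smaller than the maximum. This point is feasible for \eqref{MaxCGFdiscr}, which yields the reverse inequality and hence the equivalence of the two optimal values and the coincidence of (at least one of) their optimizers.

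This is essentially the Bauer maximum principle specialized to the Euclidean ball, whose extreme points are exactly the sphere. I do not foresee a genuine obstacle: the only mildly delicate issue is ensuring the chord construction is valid, which just requires $\boldsymbol{\theta}^{*}$ to be interior so that the line actually crosses the sphere at two points. The case $r=0$ is trivial since the objective reduces to the constant $\ln\sum_{t}\pi_{t}=0$, and for $r>0$ the argument above closes the proof.
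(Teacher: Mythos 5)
Your proof is correct, but it reaches the conclusion by a different (more elementary, self-contained) route than the paper. The paper's proof is a one-line appeal to convex analysis: since the unit $n$-ball is the convex hull of the unit $n$-sphere and the objective is convex (Theorem~\ref{convfun}), Theorem 32.2 of Rockafellar (1970) gives that the supremum of the convex objective over the ball equals its supremum over the sphere, so \eqref{MaxCGFdiscr} and \eqref{MaxCGFdiscrBall} are equivalent; attainment on the boundary is handled later in the paper via Corollary 32.3.1 of the same reference. You instead prove the underlying fact directly: Weierstrass gives a maximizer $\boldsymbol{\theta}^{*}$ on the compact ball, and your chord construction (a correct instance of the Bauer maximum principle, with $\lambda=s_{2}/(s_{2}-s_{1})\in(0,1)$ checked properly) shows that an interior maximizer is dominated by one of the two sphere points on the chord, so a maximizer can be taken on the sphere and the two optimal values coincide. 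What each approach buys: the paper's citation is shorter and places the result in its standard convex-analysis context, while your argument requires no external theorem, makes the role of convexity of $\boldsymbol{\theta}\mapsto g(r\boldsymbol{\theta})$ explicit, and simultaneously establishes existence of a boundary maximizer, which the paper only invokes later when justifying the multistart initialization on the sphere. The only cosmetic remark is that the $r=0$ case needs no separate treatment, since the chord argument does not use $r>0$.
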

\begin{proof}
  Since the $n$-ball $\boldsymbol{\theta}^{T}\boldsymbol{\theta} \leq 1$ can be seen as the convex hull of the $n$-sphere $\boldsymbol{\theta}^{T}\boldsymbol{\theta} = 1$, we can apply Theorem 32.2 of \cite{rockafellar1970convex},
  which concludes the proof.
\end{proof}
Thus, 
Problem \eqref{MaxCGFdiscrBall} is a concave programming problem, which is NP hard \citep[][]{benson1995concave}.
Indeed, the problem of globally maximizing a convex function on a convex set may have many local minima, hence finding the global maximum is a computationally difficult problem, and several approaches have been proposed in the literature to address this problem    \citep[see, e.g.,][]{pardalos1986methods}.

\subsection{A Heuristic for maximizing CGF }\label{sec:CGFMaximum}

Similar to \cite{bernacchia2008detecting2},
to solve Problem \eqref{MaxCGFdiscrBall} we use a multistart method, which consists in generating several random initial points belonging to the $n$-sphere to determine a point in a neighborhood of a global maximum, and in using a local maximizer to efficiently determine a global maximum.
We point out that we randomly generate initial points on the $n$-sphere, because, if there exists at least a global maximum of CGF on an $n$-ball, this belongs to its frontier \citep[see Corollary 32.3.1 of][]{rockafellar1970convex}.
Furthermore, since the constraint of Problem \eqref{MaxCGFdiscrBall} defines a convex set, we use the projected gradient algorithm as the local maximizer \citep[see][]{trendafilov2006projected}.
In Table \ref{PseudoCode}
we summarize the CGF maximization procedure.
\begin{table}[htbp]
  \centering
\scalebox{0.90}{
\begin{tcolorbox}
	\begin{description}[noitemsep]
\item[\Cag{1.}] Generate $N$ starting points (directions) $\boldsymbol{\theta}_{0}^{(j)}$ with $j=1, \ldots, N$  belonging to the unit $n$-sphere \citep[see][]{knuth2014art}. Fix the step size $\delta_{i}=\delta=\frac{1}{r}$ and the tolerance $\epsilon$ sufficiently small \citep[see][]{nocedal1999numerical}.
\hspace{-10cm}
\item[\Cag{2.}] \textbf{for} $j=1, \ldots, N$
\item[\Cag{3.}] \quad Set $i=0$
\item[\Cag{4.}] \quad \textbf{while} $\norm{\boldsymbol{\theta}_{i+1}^{(j)} -\boldsymbol{\theta}_{i}^{(j)}}> \epsilon$
\item[\Cag{5.}] \qquad Compute the ascent direction as $\nabla_{\boldsymbol{\theta}_{i}} G_{\boldsymbol{X}}(r, \boldsymbol{\theta}_{i}^{(j)})$
\item[\Cag{6.}] \qquad Update $\boldsymbol{\theta}_{i+1}^{(j)} = \boldsymbol{\theta}_{i}^{(j)} + \delta_{i} \nabla_{\boldsymbol{\theta}_{i}} G_{\boldsymbol{X}}(r, \boldsymbol{\theta}_{i}^{(j)})$
    %
\item[\Cag{7.}] \qquad Project by rescaling $\boldsymbol{\theta}_{i+1}^{(j)}$ to unit norm, i.e., $\boldsymbol{\theta}_{i+1}^{(j)} = \frac{ \boldsymbol{\theta}_{i+1}^{(j)}}{ \norm{\boldsymbol{\theta}_{i+1}^{(j)}}}$
\item[\Cag{8.}] \qquad Update $i=i+1$
\item[\Cag{9.}] \quad \textbf{end while}
\item[\Cag{10.}] \textbf{end for}
\end{description}
\end{tcolorbox}
}
  \caption{Pseudocode of the CGF maximization procedure}\label{PseudoCode}
\end{table}
%

\noindent
More precisely, in Expression \eqref{eq:disCGF}, we assume $\pi_{t}=\frac{1}{T}$ for $t=1, \cdots, T$, and, therefore, the CGF of a discrete multivariate variable $\boldsymbol{X}_{t}=(X_{1,t},\ldots, X_{n,t})$ becomes
	\begin{equation}\label{eq:CGFSampleEstimate}
	G_{\boldsymbol{X}}(r, \boldsymbol{\theta})=\ln \displaystyle\frac{1}{T} \sum_{t=1}^{T} e^{r \boldsymbol{\theta}^{T} \boldsymbol{X}_{t}} \, .
	\end{equation}
For a fixed $r$, we set $N=1000$ and $\delta_{i}=\frac{1}{r}$, where $N$ is the number of starting points of the multistart heuristic and $\delta_{i}$ is the step size of the projected gradient algorthm (see Table \ref{PseudoCode}).
Furthermore, from Expression \eqref{eq:CGFSampleEstimate}, for a given $r$, the gradient of $G_{\boldsymbol{X}}(r, \boldsymbol{\theta})$ is
\begin{equation}\label{eq:gradient}
  \nabla_{\boldsymbol{\theta}} G_{\boldsymbol{X}}(r, \boldsymbol{\theta})= r \displaystyle \frac{\frac{1}{T} \sum_{t=1}^{T} \boldsymbol{X}_{t} e^{r \boldsymbol{\theta}^{T} \boldsymbol{X}_{t} }} {\frac{1}{T} \sum_{t=1}^{T} e^{r \boldsymbol{\theta}^{T} \boldsymbol{X}_{t}}}
\end{equation}
Therefore, in Step 6 of the pseudocode in Table \ref{PseudoCode}, the iterative scheme to find a local optimum is defined by
	\begin{equation}
	\label{Alg1}
	\boldsymbol{\theta}_{i+1} - \boldsymbol{\theta}_{i} = \delta_{i} \nabla_{\boldsymbol{\theta}_{i}} G_{\boldsymbol{X}}(r, \boldsymbol{\theta}_{i}) = \displaystyle \frac{ \sum_{t=1}^{T} \boldsymbol{X}_{t} e^{r \boldsymbol{\theta}_{i}^{T} \boldsymbol{X}_{t}}} {\sum_{t=1}^{T} e^{r \boldsymbol{\theta}_{i}^{T} \boldsymbol{X}_{t}}} \, ,
	\end{equation}
while, in Step 7, $\boldsymbol{\theta}_{i+1}$ is normalized to one.

\noindent
Thus applying the CGF maximization procedure to Problem  \eqref{MaxCGFdiscrBall}, we can obtain the local maxima $\boldsymbol{\hat{\theta}}=\boldsymbol{\hat{\theta}}(r)$, which, except for symmetric distributions (see Section \ref{sec:Normal}),
depend on $r$.
As discussed in Section \ref{sec:cumulantsoptdir},
if $r$ (the distance between data points and the center) is small, then the CGF maximization procedure essentially picks the first principal component of the classical PCA.
Whereas, if $r$ is large, then the optimal directions $\boldsymbol{\hat{\theta}}$ maximizing $G_{\boldsymbol{X}}(r, \boldsymbol{\theta})$ strongly depend on
the higher-order cumulants.
On the other hand, higher values of $r$ produce a less accurate estimate of the function $G(r, \boldsymbol{\theta})$, since the data sample is finite and a few outliers could heavily influence the higher-order cumulants.
Therefore, when setting the value of $r$, one must consider the trade-off between finding $\boldsymbol{\hat{\theta}}(r)$ with large $r$ (thus involving higher-order cumulants in the procedure) and obtaining an accurate estimate of \eqref{eq:CGFSampleEstimate}.
As suggested by \cite{bernacchia2008detecting2},
we set a value $\bar{r}$ such that the CGF estimation error is limited.
We identify such an error by means of the relative variance of the CGF estimate, $\varepsilon_{G}^{2}$, that, as shown in Appendix \ref{RelErr}, assuming i.i.d. normally distributed random vectors $\boldsymbol{X}_{t}\sim N_{n}(\boldsymbol{0}, \boldsymbol{\Sigma})$ $\forall t =1, \ldots, T$, is as follows
\begin{equation}\label{eq:RelVarCGF_Taylor_First3a}
  \varepsilon_{G}^{2} = \frac{\V[G]}{(\E[G])^2} \simeq \frac{4}{T} \frac{e^{ \bar{r}^2 \lambda_1 } -1}{\bar{r}^4 ( \lambda_1 )^2} \, ,
\end{equation}
where $T$ is the number of data points and $\lambda_1$ is the largest eigenvalue computed by standard PCA.
In the experimental analysis, $\bar{r}$ is found through \eqref{eq:RelVarCGF_Taylor_First3a} by setting $\varepsilon_{G}=10\%$.

\section{Outlier detection methodologies}\label{sec:outdet}

In this section, we introduce the proposed outlier detection algorithm for multivariate data, named the MaxCGF algorithm, consisting in finding outliers in univariate projections of such data. Our approach is compared with two other projection-based methods, i.e., one developed by \cite{pena2001multivariate, pena2007combining} and the other developed by \cite{domino2020multivariate}. Their main difference relies on the selected directions onto which the data are projected.
More precisely, \cite{pena2001multivariate, pena2007combining} consider the directions for which the kurtosis of the projected data shows the highest and lowest values. \cite{domino2020multivariate} takes into account the projection directions for which a multivariate series exhibits the highest absolute value of the 4$^{th}$ order cumulant. We propose a more general approach that is based on the projection directions that maximize the Cumulant Generating Function of the multivariate data.
In the following we report the main steps of our procedure.
\begin{enumerate}
	\item Preprocess the original data $\boldsymbol{X}$, yielding centered data $\boldsymbol{Y}$.
Note that while both \cite{pena2001multivariate, pena2007combining} and \cite{domino2020multivariate} standardize $\boldsymbol{X}$, we only center them.
\item Find the directions that maximize CGF of $\boldsymbol{Y}$.
\item Compute the univariate projection $\boldsymbol{Z}$ of $\boldsymbol{Y}$ in the directions identified in Step 2.
\item Remove outliers. Note that to identify whether a generic element $z_t$ of $\boldsymbol{Z}=\{z_t \}_{t=1, \ldots, T}$ is an outlier, similar to \cite{pena2001multivariate,pena2007combining} and \cite{domino2020multivariate}, we compute the following quantity
\begin{equation}
		\label{unimeas}
q_{t} = \frac{|z_{t} - \mbox{median} (\boldsymbol{Z})|}{MAD( \boldsymbol{Z})} , \quad t = 1,\ldots,T \, ,
\end{equation}
where
$T$ is the length of the time series, and $MAD(\boldsymbol{Z})$ represents the median absolute deviation of $\boldsymbol{Z}$.
Then, a generic outcome $z_t$ is classified as an outlier if its corresponding $q_{t}$ exceeds a fixed threshold $\beta$.
Such a threshold is selected to cover the whole detection range both in terms of True Positive Rate and in terms of False Positive Rate (for more details, see Section \ref{sec:expana}).
In order to determine the optimal $\beta$, we consider equally spaced values of $\beta$ ranging from 0.5 to 10 with a step of 0.25.
\item Repeat Steps 2, 3 and 4 until the kurtosis of the univariate projection $\boldsymbol{Z}$ increases \citep[see][]{domino2020multivariate}.
\end{enumerate}
In Table \ref{PseudoCodeAlgOut} we summarize the MaxCGF algorithm (pseudocode) for outlier detection.
\begin{table}[htbp]
\centering
\scalebox{0.95}{
\begin{tcolorbox}
\begin{description}[noitemsep]
\item[\Cag{1.}] Fix the threshold $\beta$;
\hspace{-10cm}
\item[\Cag{2.}] Center data, i.e., $\boldsymbol{Y}^{(0)} = \boldsymbol{X} - \boldsymbol{\mu}$;
\item[\Cag{3.}] Find the $N$ directions $\hat{\boldsymbol{\theta}}_{1}^{(0)}, \hat{\boldsymbol{\theta}}_{2}^{(0)}, \ldots, \hat{\boldsymbol{\theta}}_{N}^{(0)}$, maximizing CGF of $\boldsymbol{Y}^{(0)}$ (see Table \ref{PseudoCode});
\item[\Cag{4.}] \textbf{for} $j=1:N$    (i.e., for each direction  $\hat{\boldsymbol{\theta}}_{j}^{(0)}$)
\item[\Cag{5.}] \quad Set $i=0$;
\item[\Cag{6.}] \quad Project $\boldsymbol{Y}^{(0)}$ on $\hat{\boldsymbol{\theta}}_{j}^{(0)}$, thus obtaining the vector $\boldsymbol{Z}_{j}^{(0)}=\boldsymbol{Y}^{(0)} \hat{\boldsymbol{\theta}}_{j}^{(0)}$;
\item[\Cag{7.}] \quad Compute $\mathrm{Kur}_{0}=\mathbb{K}\mathrm{ur}(\boldsymbol{Z}_{j}^{(0)})$, i.e., the kurtosis of $\boldsymbol{Z}_{j}^{(0)}$;
\item[\Cag{8.}] \qquad \textbf{while} $\mathrm{Kur}_{i} < \mathrm{Kur}_{i-1}$ or $i=0$
\item[\Cag{9.}] \qquad \, Compute the vector $\boldsymbol{q}=\{q_t \}_{t=1, \ldots, T}$ as in \eqref{unimeas} for $\boldsymbol{Z}_{j}^{(i)}$;
\item[\Cag{10.}] \qquad Remove the outliers when $q_{t}>\beta$, thus
				 obtaining $\boldsymbol{Y}^{(i+1)}$;
\item[\Cag{11.}] \qquad From $\boldsymbol{Y}^{(i+1)}$, compute $\hat{\boldsymbol{\theta}}_{j}^{(i+1)}$ by using the algorithm in Table \ref{PseudoCode};
\item[\Cag{12.}] \qquad Project $\boldsymbol{Y}^{(i+1)}$ on $\hat{\boldsymbol{\theta}}_{j}^{(i+1)}$, thus obtaining the vector $\boldsymbol{Z}_{j}^{(i+1)}= \boldsymbol{Y}^{(i+1)} \hat{\boldsymbol{\theta}}_{j}^{(i+1)}$;
\item[\Cag{13.}] \qquad Compute $\mathrm{Kur}_{i+1}=\mathbb{K}\mathrm{ur}(\boldsymbol{Z}_{j}^{(i+1)})$, i.e., the kurtosis of $\boldsymbol{Z}_{j}^{(i+1)}$;
\item[\Cag{14.}] \qquad Update $i=i+1$
\item[\Cag{15.}] \quad \, \textbf{end while}
\item[\Cag{16.}] \quad $\boldsymbol{Y}^{(0)} = \boldsymbol{Y}^{(i+1)}$, $\hat{\boldsymbol{\theta}}_{j}^{(0)} = \hat{\boldsymbol{\theta}}_{j}^{(i+1)}$
\item[\Cag{17.}] \textbf{end for} 
			\end{description}
		\end{tcolorbox}
	}
	\caption{Pseudocode of the MaxCGF algorithm}\label{PseudoCodeAlgOut}
\end{table}

\noindent
In the next section, we compare the outlier detection ability of our method with that of two alternative methods proposed by \cite{pena2001multivariate,pena2007combining} and \cite{domino2020multivariate}.

\section{Experimental analysis}\label{sec:expana}

In this section, we provide a thorough empirical analysis, where the three outlier detection procedures discussed in Section \ref{sec:outdet} are tested and compared both using simulated data (see Section \ref{subsec:simdata}) and financial real-world data (see Section \ref{subsec:realdata}).
Performances are evaluated using the Receiver Operating Characteristic (ROC) curves, which are built by plotting the True Positive Rate (TPR) vs. the False Positive Rate (FPR) of each methodology for different levels of the threshold $\beta$.
TPR is the rate of truly detected outliers w.r.t. all possible outliers, and FPR (i.e., the type I error) is the rate of falsely detected outliers w.r.t. all the non-outlier data.
Furthermore, we also provide two other performance measures related to the ROC curve, namely the Area Under the Curve (AUC) and Youden's J Statistic (YJS).
AUC is the area underneath the ROC curve: if a method perfectly distinguishes between outlier and non-outlier data, then its AUC is equal to 1; if, on the other hand, AUC is less than or equal to 0.5, then this measure is uninformative, i.e., there is no difference in performance between the analyzed method and one that relies on random choices.
YJS represents the difference between TPR and FPR, and again, the higher, the better.
The Best Cutoff Value (BCV) is the largest YJS, and $\beta^{*}$ is the threshold value corresponding to BCV.
Although $\beta^{*}$ is not a performance measure in the strict sense, it provides information about the practical use of algorithms and allows one to identify whether there is an optimal range of $\beta$ for which algorithms work best.

\noindent
Note that in our experiments we report the computational results for all the three methods only in three instances (standard normal, normal, and real-world data from the Dow Jones market).
In the remaining cases, we report only the computational results of two methods since the Pe\~na-Prieto algorithm does not provide reasonable results, and the generated ROC curves do not show a nondecreasing trend at some points, or yield AUC values less than 0.5.

\noindent
We implemented all the experiments on a workstation with Intel(R) Xeon(R) CPU (E5-2623 v4, 2.6 GHz, 64 Gb RAM) under MS Windows 10, using MATLAB 9.12.0.

\subsection{Computational results for simulated data}\label{subsec:simdata}

Artificial simulated returns are drawn from four different distributions: standard normal (Section \ref{subsubsec:StdNormExpAna}), normal (Section \ref{subsubsec:NormExpAna}), skew-normal (Section \ref{subsubsec:SkewNormExpAna}) and Student's t (Section \ref{subsubsec:StudtExpAna}), with $n=30$ (the number of marginals, i.e., assets) and $T=500$ (the number of scenarios), which corresponds to 2 financial years.
Following \cite{domino2020multivariate}, the outliers matrix $\boldsymbol{O}_{(0.1 T  \times 0.5 n)}$ has dimensions $0.1 \ T \times 0.5 \ n$.
Below, we report the main steps for generating outliers and substituting them into the ordinary data.
\begin{enumerate}
	\item Compute the sample covariance matrix $\boldsymbol{\Sigma}$ from a real-world dataset (here, the Dow Jones dataset from 07/2004 to 07/2006, see Section \ref{subsec:realdata}).
	\item Generate the ordinary data $\boldsymbol{X}_{(T \times n)}$ with $T$ scenarios from an $n$-dimensional random vector using $\boldsymbol{\Sigma}$
 (in the standard normal case, $\boldsymbol{\Sigma}=\boldsymbol{I}$);
generate the outlier data $\boldsymbol{O}_{(0.1 T  \times 0.5 n)}$ using $\boldsymbol{\Sigma}$ multiplied by 15.
\item Substitute the elements of $\boldsymbol{O}_{(0.1 T  \times 0.5 n)}$ in $\boldsymbol{X}_{(T \times n)}$.
More precisely, randomly select $n/2$ columns (with $0.1 T$ elements) of $\boldsymbol{X}_{(T \times n)}$ and substitute them with the columns of $\boldsymbol{O}_{(0.1 T  \times 0.5 n)}$.
\end{enumerate}

\subsubsection{The standard normal random vector case}\label{subsubsec:StdNormExpAna}

For this experimental case, the ordinary and outlier data are $\boldsymbol{X}_{(T \times n)} \sim N(\boldsymbol{0}, \boldsymbol{I})$ and $\boldsymbol{O}_{(0.1 T  \times 0.5 n)} \sim N(\boldsymbol{0}, 15 \boldsymbol{I})$.

\noindent
In Table \ref{tab:StdNormperf}, we report the computational results obtained by using the three methods described in Section \ref{sec:outdet}.
More precisely, the Pe\~na-Prieto method obtains an almost perfect score of 0.9867 both for the AUC and the BCV performance measures.
The Domino algorithm shows the lowest performance w.r.t. the other two methods, both in terms of AUC ($=0.9316$) and BCV ($=0.7489$).
The MaxCGF method almost achieves the highest values in terms of accuracy and clearly shows a significant advantage in terms of computational burden. Its running time is less than 1 minute, compared with 15 minutes and more than 1 hour spent by the Domino and Pe\~na-Prieto methods, respectively.
\begin{table}[htbp]
	\centering
	\vspace{5pt}
	{\renewcommand{\arraystretch}{1.1}
		\hspace*{-10pt}{\small{
				\begin{tabular}{*{5}{!{\vrule width 0.9pt}p{3cm}}!{\vrule width 1.5pt}}
					\hline
					\textbf{Method} & \textbf{AUC} & \textbf{BCV} & \textbf{Time (min.)} & $\boldsymbol{\beta}^{*}$ \\ \hhline{|-|-|-|-|-|}
					\textbf{MaxCGF} & \cellcolor[rgb]{ 1,  1,  0}$0.9843$ & \cellcolor[rgb]{ 1,  1,  0}$0.9533$ & \cellcolor[rgb]{ 0,  .69,  .314}$0.6032$ & 3.25 \\
					\hhline{|-|-|-|-|-|}
					\textbf{Domino} & \cellcolor[rgb]{ 1,  0.44,  0.37} $0.9316$ & \cellcolor[rgb]{ 1,  0.44,  0.37}$0.7489$ & \cellcolor[rgb]{ 1,  1,  0}$21.6774$ & 4.25 \\
					\hhline{|-|-|-|-|-|}
					\textbf{Pe\~{n}a-Prieto} & \cellcolor[rgb]{ 0,  .69,  .314} $0.9867$ & \cellcolor[rgb]{ 0,  .69,  .314}$0.9867$ & \cellcolor[rgb]{ 1,  0.44,  0.37}$66.6850$ & 2.25 \\
					
					\hline
				\end{tabular}}}
			}
			\caption{Performances for the standard normal dataset. We mark in green the best, in yellow the intermediate, and in red the worst results.}
			\label{tab:StdNormperf}%
		\end{table}%

\noindent
Figure \ref{fig:ROCStdNorm} reports the ROC curves of the three methods, which, as already noted, show high outlier detection abilities in this experiment.
\begin{figure}[htbp]
	\centering
	\includegraphics[width=0.6\textwidth]{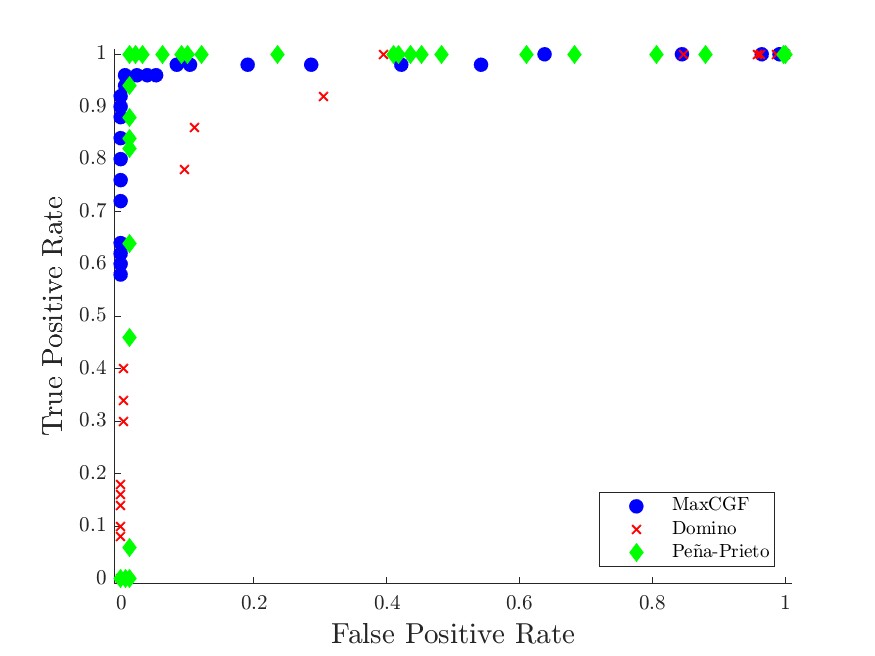}
	\caption{ROC curves of the three algorithms for the standard normal market.}
	\label{fig:ROCStdNorm}
\end{figure}

\subsubsection{The normal random vector case}\label{subsubsec:NormExpAna}

In this case, the ordinary and outlier data are $\boldsymbol{X}_{(T \times n)} \sim N(\boldsymbol{0}, \boldsymbol{\Sigma})$ and $\boldsymbol{O}_{(0.1 T  \times 0.5 n)} \sim N(\boldsymbol{0}, 15 \boldsymbol{\Sigma})$.

\noindent The empirical results of the ROC analysis are shown in Figure \ref{fig:ROCNorm} and in Table \ref{tab:Normperf}.
\begin{figure}[htbp]
	\centering
	\includegraphics[width=0.6\textwidth]{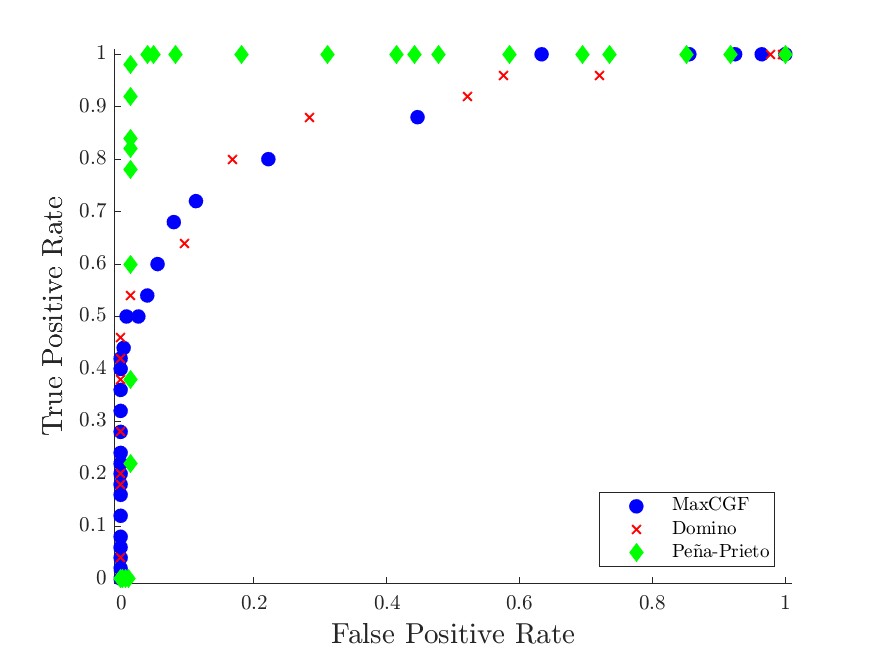}
	\caption{ROC curves of the three algorithms for the normal market.}
	\label{fig:ROCNorm}
\end{figure}
\begin{table}[htbp]
	\centering
	\vspace{5pt}
	{\renewcommand{\arraystretch}{1.1}
		\hspace*{-10pt}{\small{
				\begin{tabular}{*{5}{!{\vrule width 0.9pt}p{3cm}}!{\vrule width 1.5pt}}
					\hline
					\textbf{Method} & \textbf{AUC} & \textbf{BCV} & \textbf{Time (min.)} & $\boldsymbol{\beta}^{*}$ \\ \hhline{|-|-|-|-|-|}
					\textbf{MaxCGF} & \cellcolor[rgb]{ 1, 1, 0} $0.8811$ & \cellcolor[rgb]{ 1,  0.44,  0.37}$0.6067$ & \cellcolor[rgb]{ 0,  .69,  .314}$0.9592$ & 8.00 \\
					 \hhline{|-|-|-|-|-|}
					\textbf{Domino} & \cellcolor[rgb]{ 1,  0.44,  0.37} $0.8809$ & \cellcolor[rgb]{ 1,  1,  0}$0.6311$ & \cellcolor[rgb]{ 1,  1,  0}$14.3293$ & 7.50 \\
					\hhline{|-|-|-|-|-|}
					\textbf{Pe\~{n}a-Prieto} & \cellcolor[rgb]{ 0,  .69,  .314} $0.9847$ & \cellcolor[rgb]{ 0,  .69,  .314}$0.9644$ & \cellcolor[rgb]{ 1,  0.44,  0.37}$67.2382$ & 1.75 \\
					
					\hline
				\end{tabular}}}
			}
			\caption{Performances for the normal dataset.}
			\label{tab:Normperf}%
		\end{table}%
Here, the ranking of the three algorithms is similar to that of the standard normal dataset.
Again, the Pe\~na-Prieto algorithm outperforms the other two both in terms of AUC and BCV, while the MaxCGF algorithm is the most efficient; however, the Domino algorithm yields the second best BCV.
It is noteworthy that, compared to the standard normal case, the introduction of a correlation structure seems to worsen the results for the MaxCGF and Domino algorithms, while that of Pe\~na-Prieto does not experience noticeable changes.

\subsubsection{The skew-normal random vector case}\label{subsubsec:SkewNormExpAna}

For this experiment, the ordinary and outlier data are $\boldsymbol{X}_{(T \times n)} \sim SN(\boldsymbol{0}, \boldsymbol{\Sigma}, \boldsymbol{\alpha})$ and $\boldsymbol{O}_{(0.1 T  \times 0.5 n)} \sim SN(\boldsymbol{0}, 15 \boldsymbol{\Sigma}, \boldsymbol{\alpha})$, where $\boldsymbol{\alpha}$ is a vector whose elements are uniformly distributed in the interval $[-1,4]$.
		\begin{table}[htbp]
			\centering
			\vspace{5pt}
			{\renewcommand{\arraystretch}{1.1}
				\hspace*{-10pt}{
\scalebox{0.9}{
						\begin{tabular}{*{5}{!{\vrule width 0.9pt}p{3cm}}!{\vrule width 1.5pt}}
							\hline
							\textbf{Method} & \textbf{AUC} & \textbf{BCV} & \textbf{Time (min.)} & $\boldsymbol{\beta}^{*}$ \\ \hhline{|-|-|-|-|-|}
							\textbf{MaxCGF} & \cellcolor[rgb]{ 0,  .69,  .314}$0.9140$ & \cellcolor[rgb]{ 0,  .69,  .314}$0.6911$ & \cellcolor[rgb]{ 0,  .69,  .314}$1.0387$ & 7.25 \\ \hhline{|-|-|-|-|-|}
							\textbf{Domino} & \cellcolor[rgb]{ 1,  0.44,  0.37}$0.8896$ & \cellcolor[rgb]{ 1,  0.44,  0.37}$0.6467$ & \cellcolor[rgb]{ 1,  0.44,  0.37}$15.8041$ & 6.50 \\
							\hhline{|-|-|-|-|-|}
							\textbf{Pe\~{n}a-Prieto} & - & - & - & - \\
							\hline
						\end{tabular}}}
					}
					\caption{Performances for the skew-normal dataset.}
					\label{tab:SkewNormperf}%
				\end{table}%
\begin{figure}[htbp]
	\centering
	\includegraphics[width=0.6\textwidth]{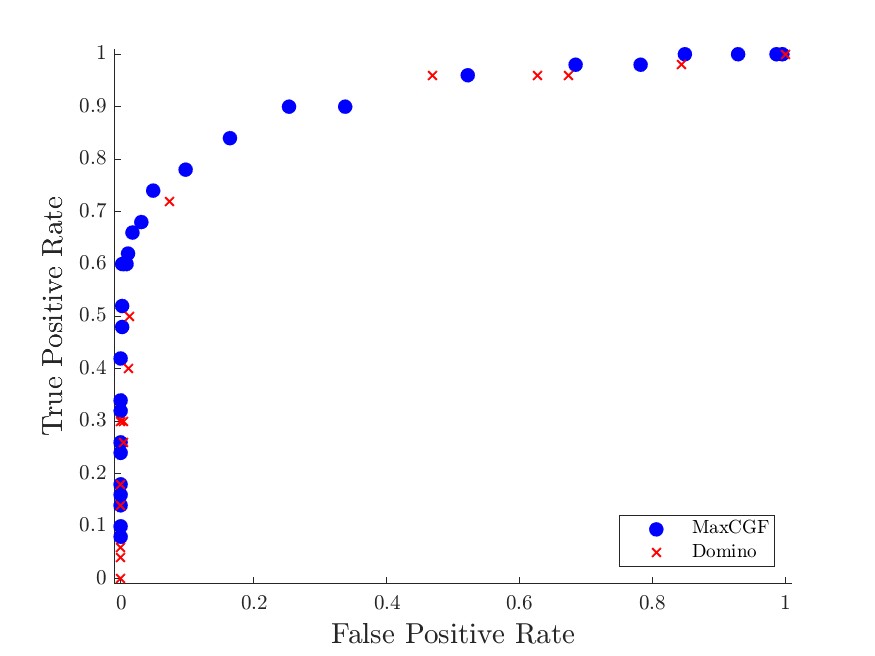}
\caption{ROC curves of the MaxCGF and Domino algorithms for the skew-normal market.}
	\label{fig:ROCSkewNorm}
\end{figure}

\noindent
As shown in Table \ref{tab:SkewNormperf} and in Figure \ref{fig:ROCSkewNorm}, the MaxCGF algorithm provides, in the case of asymmetric distribution, the highest AUC (=0.9140) and BCV (=0.6467), and it is the most efficient approach. The Domino method also performs well, although lower than the MaxCGF method. Its AUC is 0.8896, and its BCV is 0.6467, while its running time is 15 minutes compared with 1 minute spent by the MaxCGF method.

\subsubsection{The Student's t random vector case}\label{subsubsec:StudtExpAna}

Here, the ordinary and outlier data follow an $n$-variate Student's t distribution, $\boldsymbol{X}_{(T \times n)} \sim St(\boldsymbol{0}, \boldsymbol{\Sigma}, \nu)$ and $\boldsymbol{O}_{(0.1 T  \times 0.5 n)} \sim St(\boldsymbol{0}, 15 \boldsymbol{\Sigma}, \nu)$, where $\nu$ is set to 5, 10, 100, 1000.
Clearly, when $\nu$ is sufficiently high, the Student's t random vector approaches the normal one.

\noindent
Before testing the three methods described in Section \ref{sec:outdet}, we examine the outlier detection procedure by using the projection directions provided by the CGF maximization and by the classical PCA.
\begin{table}[htbp]
	\centering
	{\renewcommand{\arraystretch}{1.1}
		\hspace*{-10pt}{
\scalebox{0.9}{
				\begin{tabular}{*{5}{!{\vrule width 0.9pt}p{3cm}}!{\vrule width 1.5pt}}
					\hline
					$\nu$ & \textbf{Method} & \textbf{AUC} & \textbf{BCV} & \textbf{Time (min.)} \\ \hhline{|-|-|-|-|}
					\multirow{ 2}{*}{5}& \textbf{Classical PCA} & \cellcolor[rgb]{ 1,  0.44,  0.37}$0.7757$ & \cellcolor[rgb]{ 1,  0.44,  0.37}$0.4467$ & \cellcolor[rgb]{ 0,  .69,  .314}$0.8389$ \\
					& \textbf{MaxCGF} & \cellcolor[rgb]{ 0,  .69,  .314}$0.8544$ & \cellcolor[rgb]{ 0,  .69,  .314}$0.5933$ & \cellcolor[rgb]{ 1,  0.44,  0.37}$22.5747$  \\
					\hhline{|=|=|=|=|=|}
					\multirow{ 2}{*}{10}& \textbf{Classical PCA} & \cellcolor[rgb]{ 1,  0.44,  0.37}$0.7988$ & \cellcolor[rgb]{ 1,  0.44,  0.37}$0.4089$ & \cellcolor[rgb]{ 0,  .69,  .314}$0.7955$ \\
					& \textbf{MaxCGF} &    \cellcolor[rgb]{ 0,  .69,  .314}$0.8333$ & \cellcolor[rgb]{ 0,  .69,  .314}$0.5356$ & \cellcolor[rgb]{ 1,  0.44,  0.37}$10.0730$  \\
					\hhline{|=|=|=|=|=|}
					\multirow{ 2}{*}{100}& \textbf{Classical PCA} & \cellcolor[rgb]{ 1,  0.44,  0.37}$0.8434$ & \cellcolor[rgb]{ 1,  0.44,  0.37}$0.5600$ & \cellcolor[rgb]{ 0,  .69,  .314}$0.7952$ \\
					& \textbf{MaxCGF} &     \cellcolor[rgb]{ 0,  .69,  .314}$0.8525$ & \cellcolor[rgb]{ 0,  .69,  .314}$0.6311$ & \cellcolor[rgb]{ 1,  0.44,  0.37}$21.8163$  \\
					\hhline{|=|=|=|=|=|}
					\multirow{ 2}{*}{1000}& \textbf{Classical PCA} & \cellcolor[rgb]{ 0,  .69,  .314}$0.8504$ & \cellcolor[rgb]{ 0,  .69,  .314}$0.5756$ & \cellcolor[rgb]{ 0,  .69,  .314}$0.7755$ \\
					& \textbf{MaxCGF} & \cellcolor[rgb]{ 1,  0.44,  0.37}$0.8409$ & \cellcolor[rgb]{ 1,  0.44,  0.37}$0.5489$ & \cellcolor[rgb]{ 1,  0.44,  0.37}$17.5729$ \\
					\hhline{|-|-|-|-|-|}
					
					\hline
				\end{tabular}}}
			}
			\caption{Outlier detection performances using the CGF maximization and the classical PCA for the Student's t random vector.
}
			\label{tab:Studtthetaperf}%
		\end{table}%
As shown in Table \ref{tab:Studtthetaperf}, the latter is much more efficient in terms of running time, but, except for $\nu=1000$, the MaxCGF method is more precise, as shown by the values of AUC and BCV. Furthermore, the accuracy of the classical PCA tend to increase with $\nu$, namely when the simulated data tend to have a normal distribution. Figure \ref{fig:ROCStudttheta} exhibits the ROC curves of such experiments.
\begin{figure}[htbp]
	\centering
	\subfigure[$\nu=5$]
	{
		\includegraphics[width=7cm]{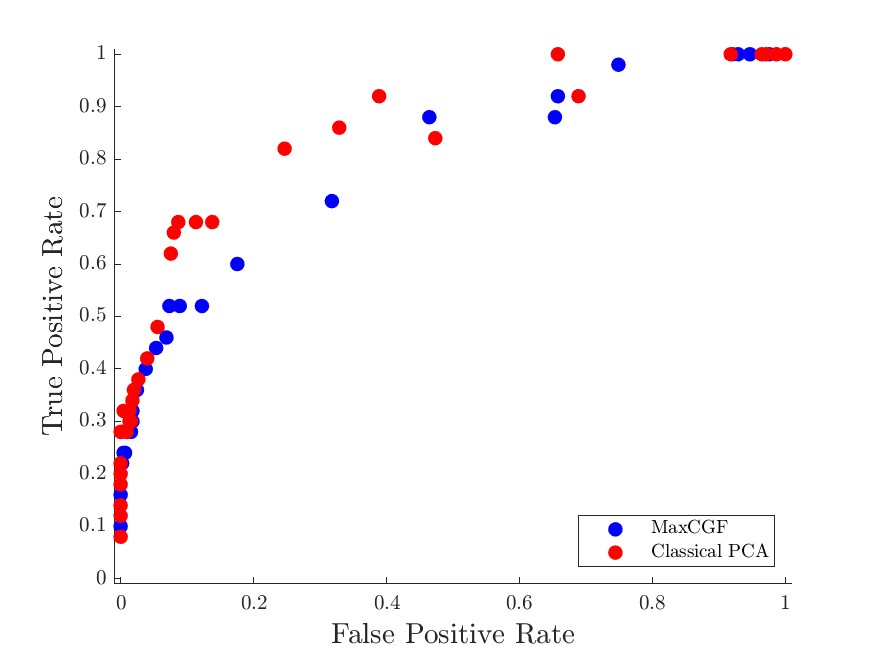}
		\label{fig:ROCStudtthetanu5}
	}
	\hfill
	\subfigure[$\nu=10$]
	{
		\includegraphics[width=7cm]{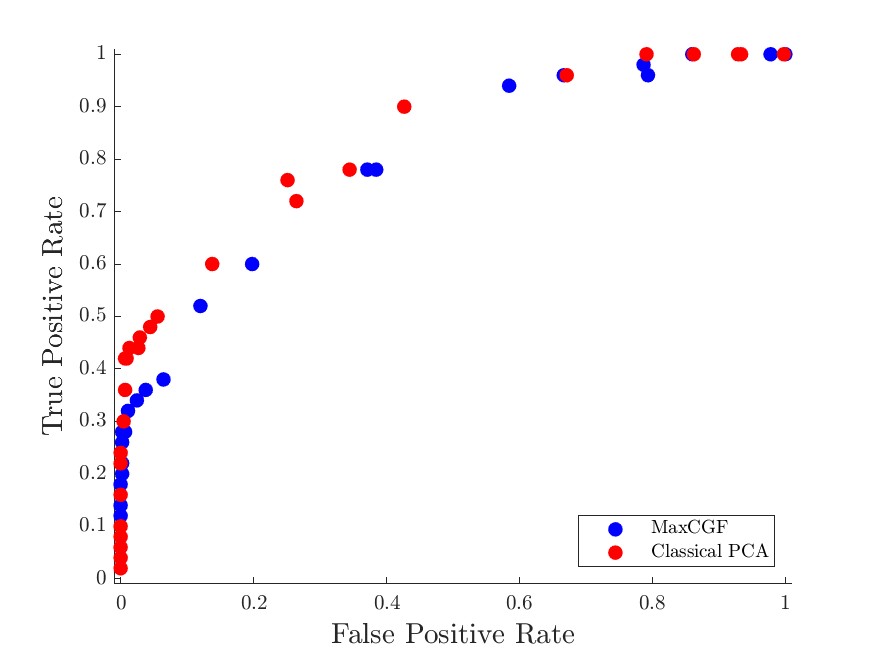}
		\label{fig:ROCStudtthetanu10}
	}
	\\
	\subfigure[$\nu=100$]
	{
		\includegraphics[width=7cm]{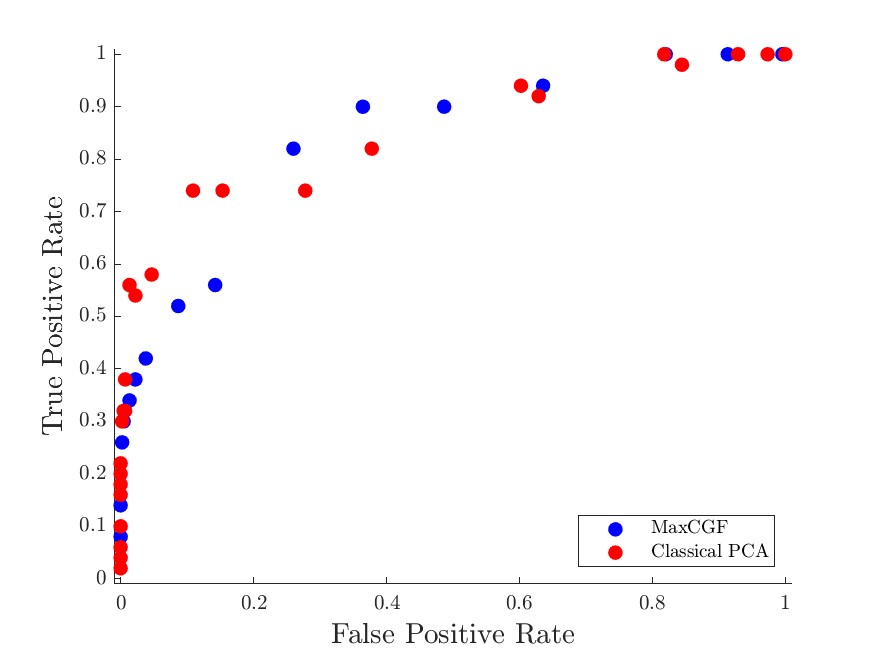}
		\label{fig:ROCStudtthetanu100}
	}
	\hfill
	\subfigure[$\nu=1000$]
	{
		\includegraphics[width=7cm]{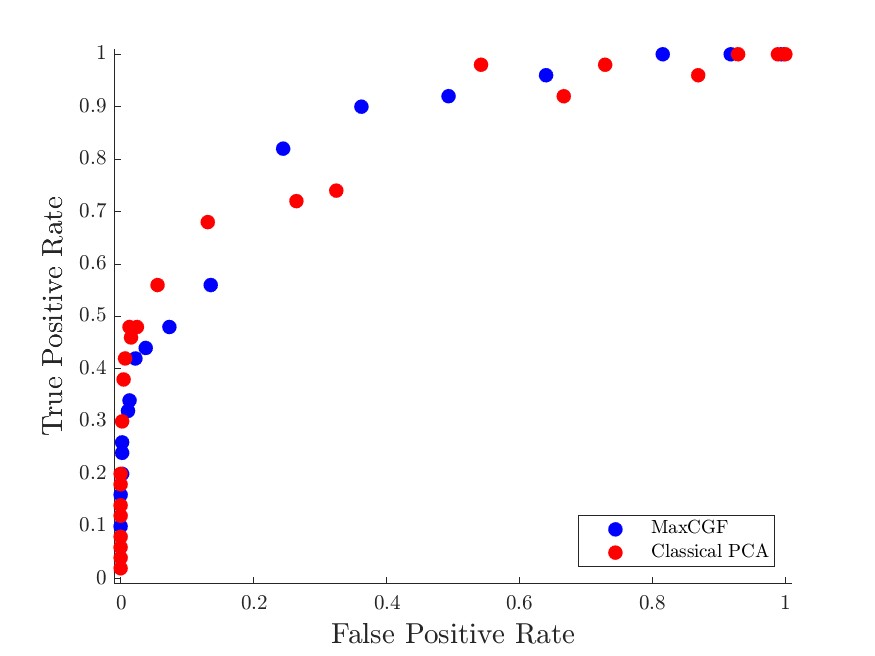}
		\label{fig:ROCStudtthetanu1000}
	}
	\caption{ROC curves using the CGF maximization and the classical PCA for the Student's t random vector.
}
	\label{fig:ROCStudttheta}
\end{figure}

\noindent
For these experiments, we also compare the performances of the Pe\~na-Prieto, Domino, and MaxCGF methods considering the degrees of freedom of the Student’s t random vector, $\nu=10$ and $\nu=30$. Note that, as mentioned at the beginning of Section \ref{sec:expana}, the Pe\~na-Prieto algorithm does not provide reasonable results.
\begin{table}[htbp]
					\centering
					\vspace{5pt}
					{\renewcommand{\arraystretch}{1.1}
						\hspace*{-10pt}{
\scalebox{0.9}{
\begin{tabular}{*{5}{!{\vrule width 0.9pt}p{3cm}}!{\vrule width 1.5pt}}
									\hline
									\textbf{Method} & \textbf{AUC} & \textbf{BCV} & \textbf{Time (min.)} & $\boldsymbol{\beta}^{*}$ \\ \hhline{|-|-|-|-|-|}
									\textbf{MaxCGF} & \cellcolor[rgb]{ 1,  0.44,  0.37} $0.8333$ & \cellcolor[rgb]{ 1,  0.44,  0.37}$0.5356$ & \cellcolor[rgb]{ 1,  0.44,  0.37}$10.0730$ & 7.25 \\
									\hhline{|-|-|-|-|-|}
									\textbf{Domino} & \cellcolor[rgb]{ 0,  .69,  .314} $0.8498$ & \cellcolor[rgb]{ 0,  .69,  .314}$0.5622$ & \cellcolor[rgb]{ 0,  .69,  .314}$8.6374$ & 7.50 \\
									\hhline{|-|-|-|-|-|}
									\textbf{Pe\~{n}a-Prieto} & - & - & - & - \\
									
									\hline
								\end{tabular}}}
							}
							\caption{Performances for the Student's t dataset for $\nu=10$.}
							\label{tab:Studtperfnu10}%
						\end{table}%
					\begin{table}[htbp]
						\centering
						\vspace{5pt}
						{\renewcommand{\arraystretch}{1.1}
							\hspace*{-10pt}{
\scalebox{0.9}{
\begin{tabular}{*{5}{!{\vrule width 0.9pt}p{3cm}}!{\vrule width 1.5pt}}
										\hline
										\textbf{Method} & \textbf{AUC} & \textbf{BCV} & \textbf{Time (min.)} & $\boldsymbol{\beta}^{*}$ \\ \hhline{|-|-|-|-|-|}
										\textbf{MaxCGF} & \cellcolor[rgb]{ 0,  .69,  .314}$0.9116$ & \cellcolor[rgb]{ 0,  .69,  .314}$0.7044$ & \cellcolor[rgb]{ 0,  .69,  .314}$8.0739$ & 7.75 \\
										\hhline{|-|-|-|-|-|}
										\textbf{Domino} & \cellcolor[rgb]{ 1,  0.44,  0.37} $0.8903$ & \cellcolor[rgb]{ 1,  0.44,  0.37}$0.6267$ & \cellcolor[rgb]{ 1,  0.44,  0.37}$19.1567$ & 7.25 \\
										\hhline{|-|-|-|-|-|}
										\textbf{Pe\~{n}a-Prieto} & - & - & - & - \\
										
										\hline
							\end{tabular}}}
						}
						\caption{Performances for the Student's t dataset for $\nu=30$.}
						\label{tab:Studtperfnu30}%
					\end{table}%
Tables \ref{tab:Studtperfnu10} and \ref{tab:Studtperfnu30} show the values of AUC, BCV, and CPU time for $\nu=10$ and $\nu=30$, respectively. In the former case, the Domino approach slightly outperforms the MaxCGF one, while in the latter case, the MaxCGF method yields better results in terms of AUC, BCV, and running time.
Figure \ref{fig:ROCStudt} exhibits the two methods' ROC curves, highlighting elevated outlier detection abilities in these experiments.
\begin{figure}[htbp]
			\centering
			\subfigure[$\nu=10$.]{\includegraphics[width=.45\textwidth]{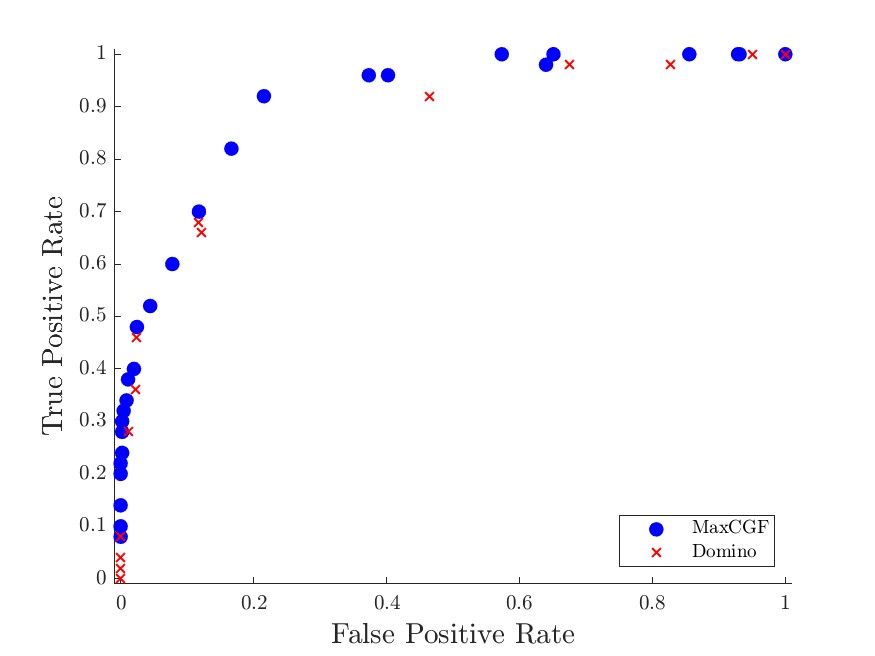}}
			\subfigure[$\nu=30$.]{\includegraphics[width=.45\textwidth]{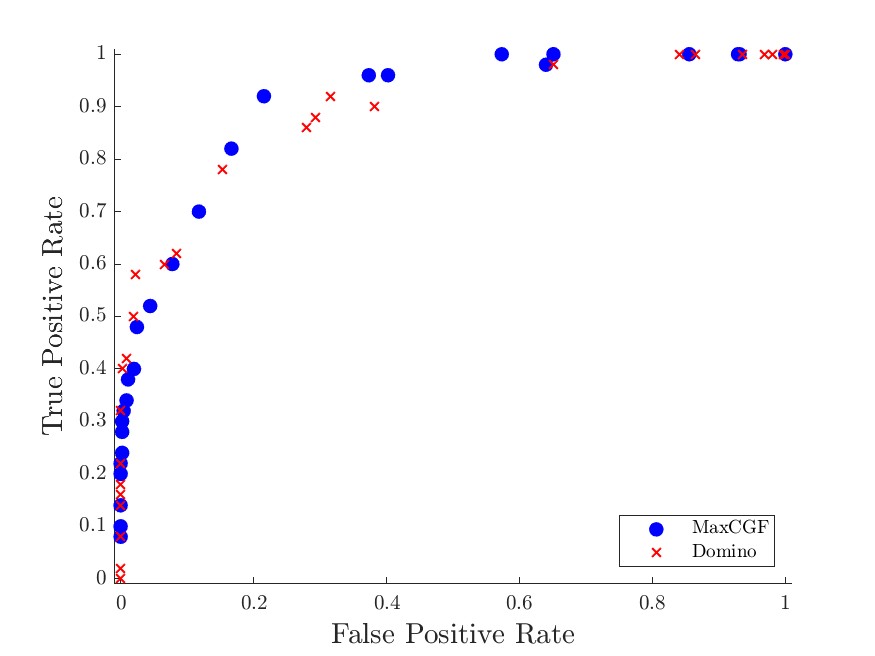}}
			\caption{ROC curves of the MaxCGF and Domino algorithms for the Student's t random vector.}
			\label{fig:ROCStudt}
		\end{figure}

\subsection{Computational results for financial real-world data}\label{subsec:realdata}

In this section, we use the three outlier detection methods to identify financial crises, which, as mentioned in the introduction, can be seen as periods when markets experience atypical behavior.
More precisely, we apply the outlier detection analysis to six real-world financial datasets
belonging to major stock markets across the world.
In Table \ref{tab:datasets}, we provide some details about these datasets, which consist of
daily prices, adjusted for dividends and stock splits, obtained from \emph{Refinitiv}.
\begin{table}[htbp]
 	\centering
 	\vspace{5pt}
 	{
 \scalebox{0.9}{
 \begin{tabular}{l|c|c|c|c}
 			Market   &	Abbrev.   & \# assets & Country & From-To \\    \hline
 			Dow Jones Industrial Average &	DJIA    &  30    & USA & \multirow{6}{*}{09/04/2019-23/03/2020}\\
Euro Stoxx 50 &	STOXX50 &  47    & Eurozone &  \\
DAX 30  	&	DAX     &  28    & Germany  &  \\
 CAC 40		&	CAC     &  40    &  France  &  \\
 FTSE 100	&	FTSE    &  98    &  UK      &  \\
EuroNext 100	&	N100    &  98    & Eurozone  & \\
 				\hline
 			\end{tabular}}}
 			\caption{List of the daily datasets analyzed.}
 			\label{tab:datasets}
 		\end{table}
From prices, we use both linear and logarithmic returns for the empirical analysis.
Since the results obtained are practically identical, we report here only those obtained by means of the linear returns.
The data analyzed are daily returns from April 9, 2019, to March 23, 2020 (approximately one financial year).
We have chosen this time horizon because it contains a period of high instability due to the COVID-19 pandemic, the effects of which occur approximately in early February 2020 (see, e.g. \cite{shu2021covid}).
%
\begin{table}[htbp]
	\centering
	\vspace{5pt}
	{\renewcommand{\arraystretch}{1.1}
		\hspace*{-10pt}{\small{
	\begin{tabular}{*{5}{!{\vrule width 0.9pt}p{3cm}}!{\vrule width 1.5pt}}
		\hline
		\textbf{Method} & \textbf{AUC} & \textbf{BCV} & \textbf{Time (min.)} & $\boldsymbol{\beta}^{*}$ \\ \hhline{|-|-|-|-|-|}
		\textbf{MaxCGF} & \cellcolor[rgb]{ 0,  .69,  .314}$0.9057$ & \cellcolor[rgb]{ 0,  .69,  .314}$0.7156$ & \cellcolor[rgb]{ 0,  .69,  .314}$5.9174$ & 6.75 \\ \hhline{|-|-|-|-|-|}
		\textbf{Domino} & \cellcolor[rgb]{ 1,  1,  0}$0.8282$ & \cellcolor[rgb]{ 1,  1,  0}$0.6178$ & \cellcolor[rgb]{ 1,  1,  0}$25.6869$ & 5.75 \\
		\hhline{|-|-|-|-|-|}
		\textbf{Pe\~{n}a-Prieto} & \cellcolor[rgb]{ 1,  0.44,  0.37}0.7873 & \cellcolor[rgb]{ 1,  0.44,  0.37}$0.5511$ & \cellcolor[rgb]{ 1,  0.44,  0.37}$84.0113$ & 6.50 \\
		\hline
	\end{tabular}}}
}
	\caption{Performances for DJIA.}
	\label{tab:DJIAperf}%
\end{table}%
\begin{table}[htbp]
	\centering
	\vspace{5pt}
	{\renewcommand{\arraystretch}{1.1}
		\hspace*{-10pt}{\small{
				\begin{tabular}{*{5}{!{\vrule width 0.9pt}p{3cm}}!{\vrule width 1.5pt}}
					\hline
					\textbf{Method} & \textbf{AUC} & \textbf{BCV} & \textbf{Time (min.)} & $\boldsymbol{\beta}^{*}$ \\ \hhline{|-|-|-|-|-|}
					\textbf{MaxCGF} & \cellcolor[rgb]{ 0,  .69,  .314} $0.8312$ & \cellcolor[rgb]{ 0,  .69,  .314}$0.5666$ & \cellcolor[rgb]{ 0,  .69,  .314}$3.6558$ & 8.00 \\ \hhline{|-|-|-|-|-|}
					\textbf{Domino} & \cellcolor[rgb]{ 1,  0.44,  0.37} $0.7480$ & \cellcolor[rgb]{ 1,  0.44,  0.37}$0.4713$ & \cellcolor[rgb]{ 1,  0.44,  0.37}$92.4653$ & 7.50 \\ \hhline{|-|-|-|-|-|}
					\textbf{Pe\~{n}a-Prieto} & - & - & - & - \\
				\hline
			\end{tabular}}}
		}
			\caption{Performances for STOXX50.}
			\label{tab:ESTXperf}%
		\end{table}%
		\begin{table}[htbp]
			\centering
			\vspace{5pt}
			{\renewcommand{\arraystretch}{1.1}
				\hspace*{-10pt}{\small{
						\begin{tabular}{*{5}{!{\vrule width 0.9pt}p{3cm}}!{\vrule width 1.5pt}}
							\hline
							\textbf{Method} & \textbf{AUC} & \textbf{BCV} & \textbf{Time (min.)} & $\boldsymbol{\beta}^{*}$ \\ \hhline{|-|-|-|-|-|}
							\textbf{MaxCGF} & \cellcolor[rgb]{ 0,  .69,  .314}$0.8866$ & \cellcolor[rgb]{ 0,  .69,  .314}$0.6555$ & \cellcolor[rgb]{ 0,  .69,  .314}$3.8149$ & 5.75 \\ \hhline{|-|-|-|-|-|}
							\textbf{Domino} & \cellcolor[rgb]{ 1,  0.44,  0.37}$0.7924$ & \cellcolor[rgb]{ 1,  0.44,  0.37}$0.5701$ & \cellcolor[rgb]{ 1,  0.44,  0.37}$17.2033$ & 7.00\\ \hhline{|-|-|-|-|-|}
							\textbf{Pe\~{n}a-Prieto} & - &  - & - & -\\
							
							\hline
						\end{tabular}}}
					}
					\caption{Performances for DAX.}
					\label{tab:DAXperf}%
				\end{table}%
\begin{table}[htbp]
	\centering
	\vspace{5pt}
	{\renewcommand{\arraystretch}{1.1}
		\hspace*{-10pt}{\small{
				\begin{tabular}{*{5}{!{\vrule width 0.9pt}p{3cm}}!{\vrule width 1.5pt}}
					\hline
					\textbf{Method} & \textbf{AUC} & \textbf{BCV} & \textbf{Time (min.)} & $\boldsymbol{\beta}^{*}$ \\ \hhline{|-|-|-|-|-|}
					\textbf{MaxCGF} & \cellcolor[rgb]{ 0,  .69,  .314} $0.8326$ & \cellcolor[rgb]{ 0,  .69,  .314}$0.5187$ & \cellcolor[rgb]{ 0,  .69,  .314}$3.7390$ & 8.25 \\
					 \hhline{|-|-|-|-|-|}
					\textbf{Domino} & \cellcolor[rgb]{ 1,  0.44,  0.37} $0.7391$ & \cellcolor[rgb]{ 1,  0.44,  0.37}$0.4637$ & \cellcolor[rgb]{ 1,  0.44,  0.37}$75.1684$ & 6.50 \\ \hhline{|-|-|-|-|-|}
					\textbf{Pe\~{n}a-Prieto} &  - & - & - & - \\					
					\hline
				\end{tabular}}}
			}
			\caption{Performances for CAC.}
			\label{tab:CAC40perf}%
		\end{table}%
\begin{table}[htbp]
	\centering
	\vspace{5pt}
	{\renewcommand{\arraystretch}{1.1}
		\hspace*{-10pt}{\small{
				\begin{tabular}{*{5}{!{\vrule width 0.9pt}p{3cm}}!{\vrule width 1.5pt}}
					\hline
					\textbf{Method} & \textbf{AUC} & \textbf{BCV} & \textbf{Time (min.)} & $\boldsymbol{\beta}^{*}$ \\ \hhline{|-|-|-|-|-|}
					\textbf{MaxCGF} & \cellcolor[rgb]{ 0,  .69,  .314} $0.8021$ & \cellcolor[rgb]{ 0,  .69,  .314}$0.5207$ & \cellcolor[rgb]{ 0,  .69,  .314}$5.4309$ & 7.50 \\					
					\hhline{|-|-|-|-|-|}
					\textbf{Domino} & \cellcolor[rgb]{ 1,  0.44,  0.37}$0.7061$ & \cellcolor[rgb]{ 1,  0.44,  0.37}$0.4641$ & \cellcolor[rgb]{ 1,  0.44,  0.37}$2644.9669$ & 7.50 \\ \hhline{|-|-|-|-|-|}
					\textbf{Pe\~{n}a-Prieto} & - & - & - & - \\				
					\hline
				\end{tabular}}}
			}
			\caption{Performances for FTSE.}
			\label{tab:FTSEperf}%
		\end{table}%
		\begin{table}[htbp]
			\centering
			\vspace{5pt}
			{\renewcommand{\arraystretch}{1.1}
				\hspace*{-10pt}{\small{
						\begin{tabular}{*{5}{!{\vrule width 0.9pt}p{3cm}}!{\vrule width 1.5pt}}
							\hline
							\textbf{Method} & \textbf{AUC} & \textbf{BCV} & \textbf{Time (min.)} & $\boldsymbol{\beta}^{*}$ \\ \hhline{|-|-|-|-|-|}
							\textbf{MaxCGF} & \cellcolor[rgb]{ 0,  .69,  .314}$0.8643$ & \cellcolor[rgb]{ 0,  .69,  .314}$0.5474$ & \cellcolor[rgb]{ 0,  .69,  .314}$6.2306$ & 7.75 \\ \hhline{|-|-|-|-|-|}
							\textbf{Domino} & \cellcolor[rgb]{ 1,  0.44,  0.37}$0.6794$ & \cellcolor[rgb]{ 1,  0.44,  0.37}$0.3624$ & \cellcolor[rgb]{ 1,  0.44,  0.37}$2986.0548$ & 7.75 \\
							\hhline{|-|-|-|-|-|}
							\textbf{Pe\~{n}a-Prieto} & - & - & - & - \\
							\hline
						\end{tabular}}}
					}
					\caption{Performances for N100.}
					\label{tab:ENXperf}%
				\end{table}%

\noindent
In Tables \ref{tab:DJIAperf}-\ref{tab:ENXperf}, we report the computational results for all the datasets listed in Table \ref{tab:datasets}.
These tables show that AUC obtained by the MaxCGF approach is between 0.8021 and 0.9057, and is always higher than that achieved by the other two methods.
\begin{figure}[htbp]
	\centering
	\subfigure[Value of the Dow Jones index. The red line indicates the beginning of the crisis.]{\includegraphics[width=.45\textwidth]{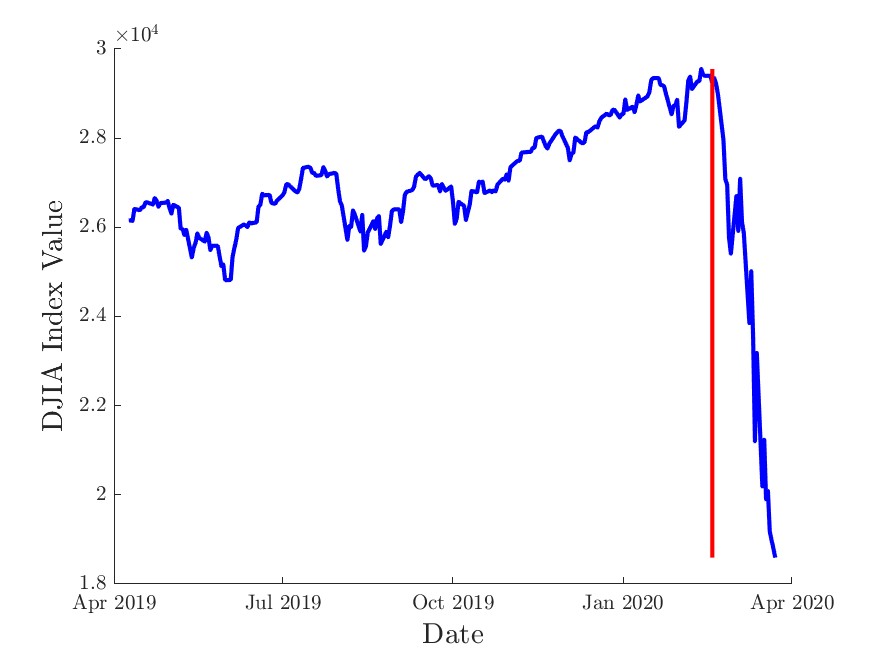}}
	\subfigure[ROC curves of the three algorithms for the DJIA dataset.]{\includegraphics[width=.45\textwidth]{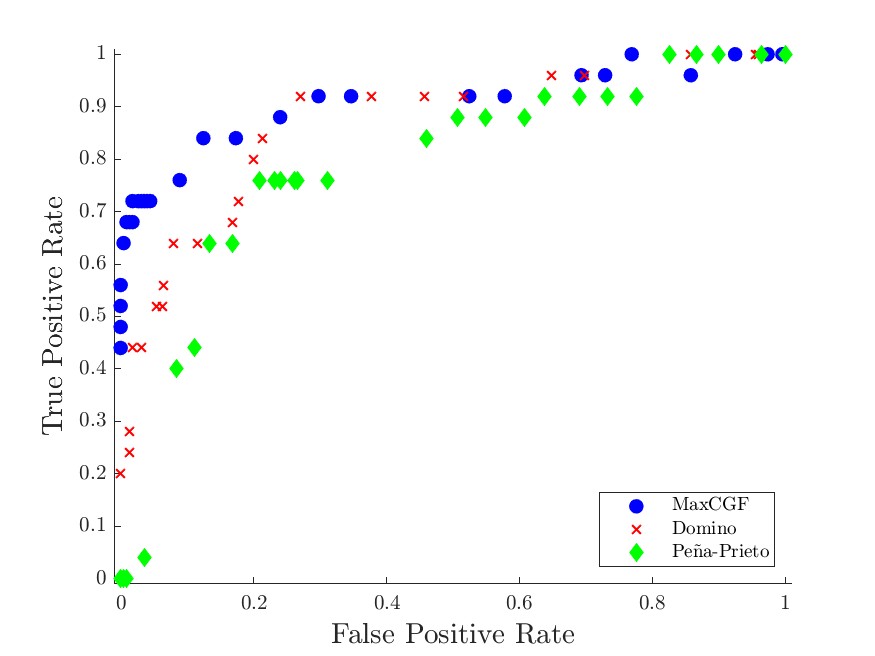}}
	\caption{DJIA dataset.}
	\label{fig:DJIA}
\end{figure}
\begin{figure}[htbp]
	\centering
	\subfigure[Value of the Eurostoxx50 index. The red line indicates the beginning of the crisis.]{\includegraphics[width=.45\textwidth]{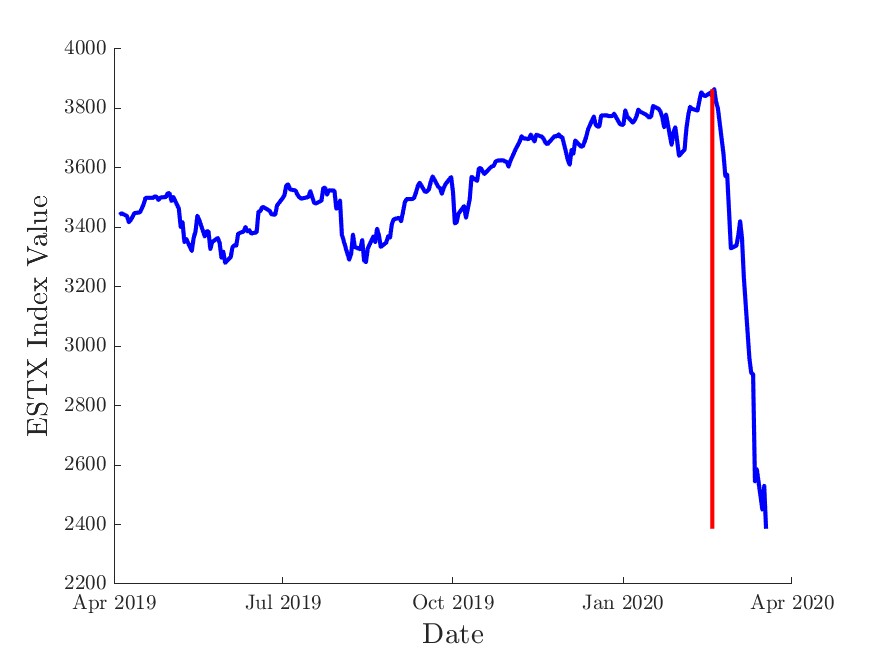}}
\subfigure[ROC curves of the MaxCGF and Domino algorithms for the STOXX50 dataset.]{\includegraphics[width=.45\textwidth]{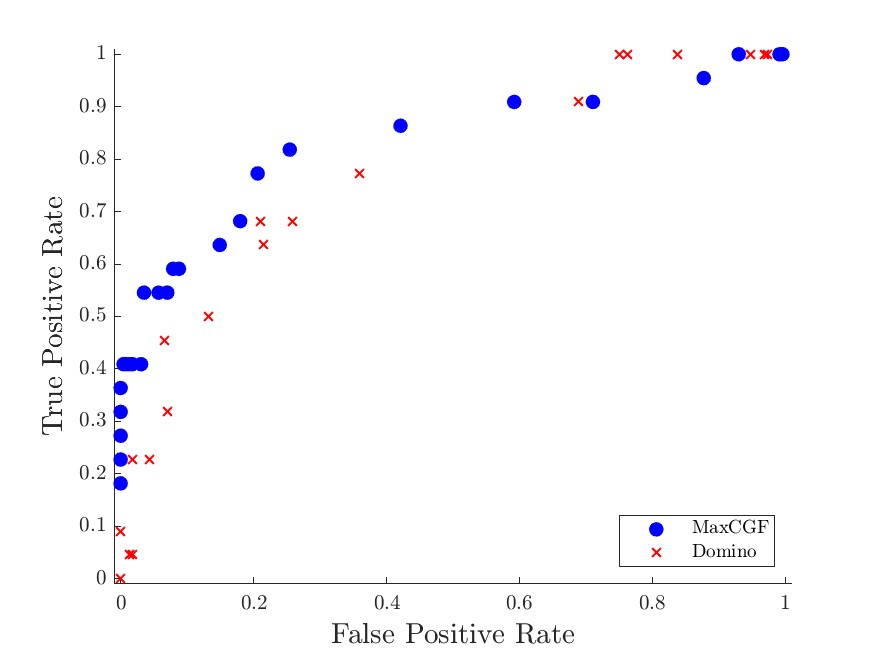}}
	\caption{STOXX50 dataset.}
	\label{fig:ESTX}
\end{figure}
\begin{figure}[htbp]
	\centering
	\subfigure[Value of the DAX index. The red line indicates the beginning of the crisis.]{\includegraphics[width=.45\textwidth]{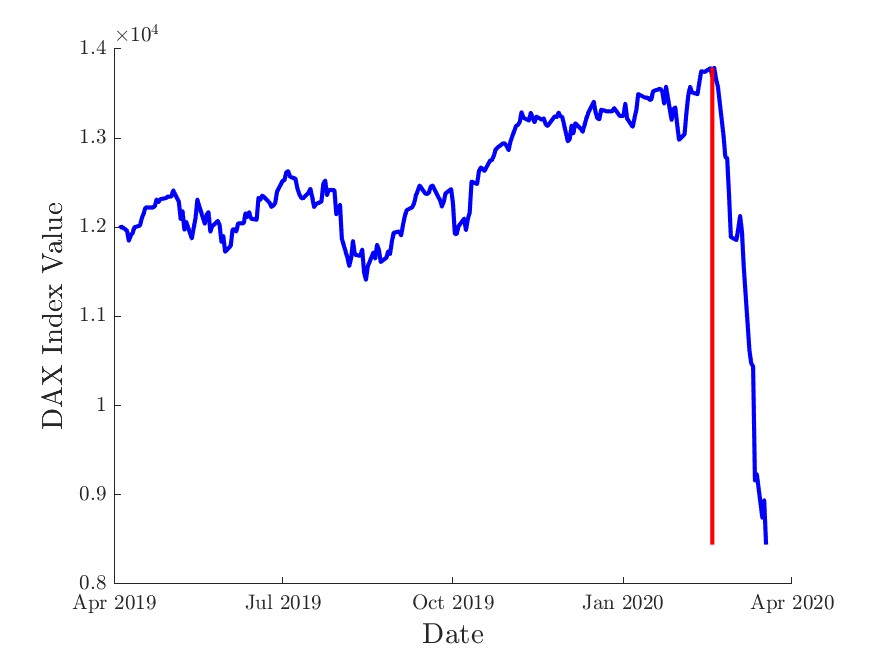}}
\subfigure[ROC curves of the MaxCGF and Domino algorithms for the DAX dataset.]{\includegraphics[width=.45\textwidth]{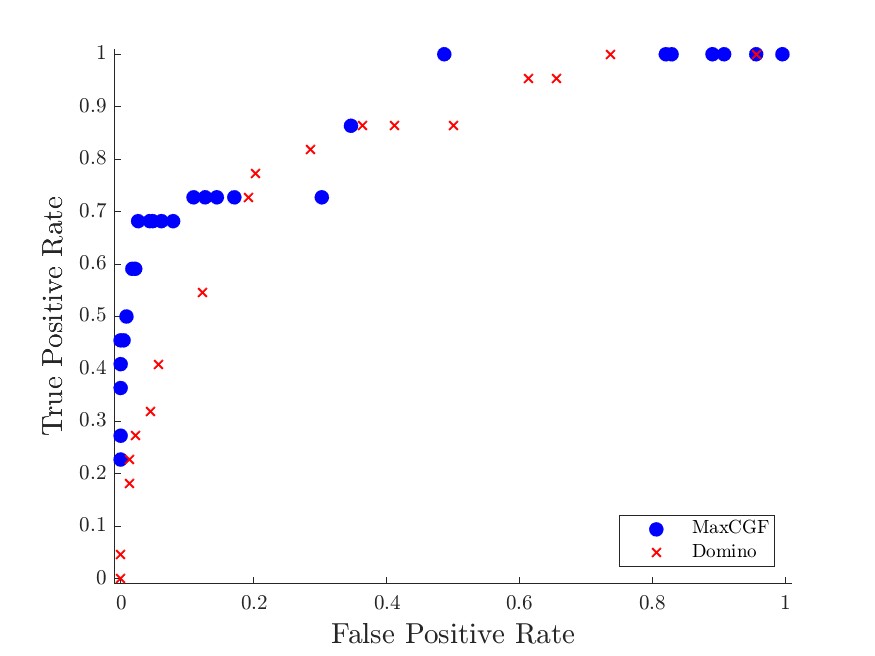}}
	\caption{DAX dataset.}
	\label{fig:DAX}
\end{figure}
\begin{figure}[htbp]
	\centering
	\subfigure[Value of the CAC40 index. The red line indicates the beginning of the crisis.]{\includegraphics[width=.45\textwidth]{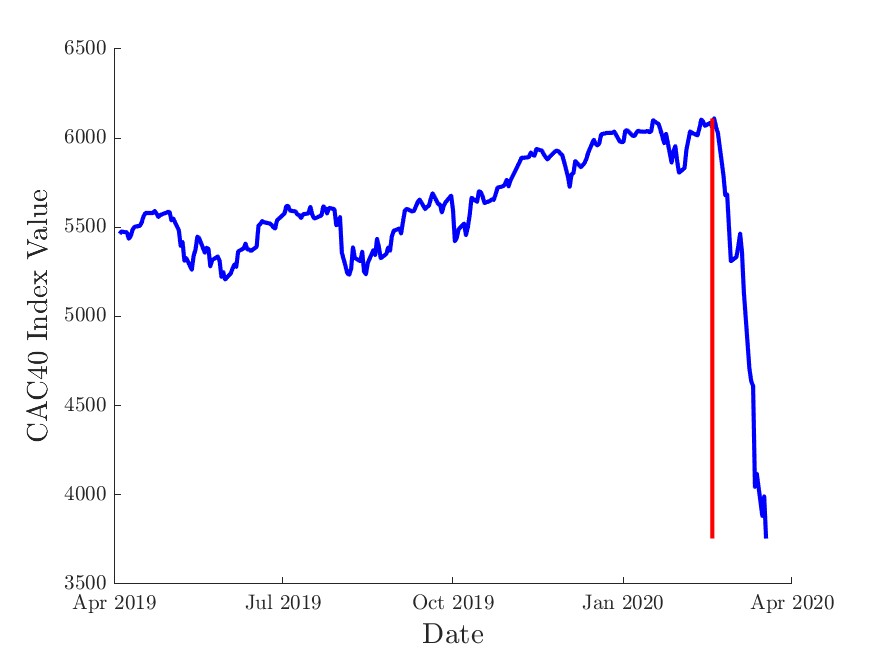}}
\subfigure[ROC curves of the MaxCGF and Domino algorithms for the CAC40 dataset.]{\includegraphics[width=.45\textwidth]{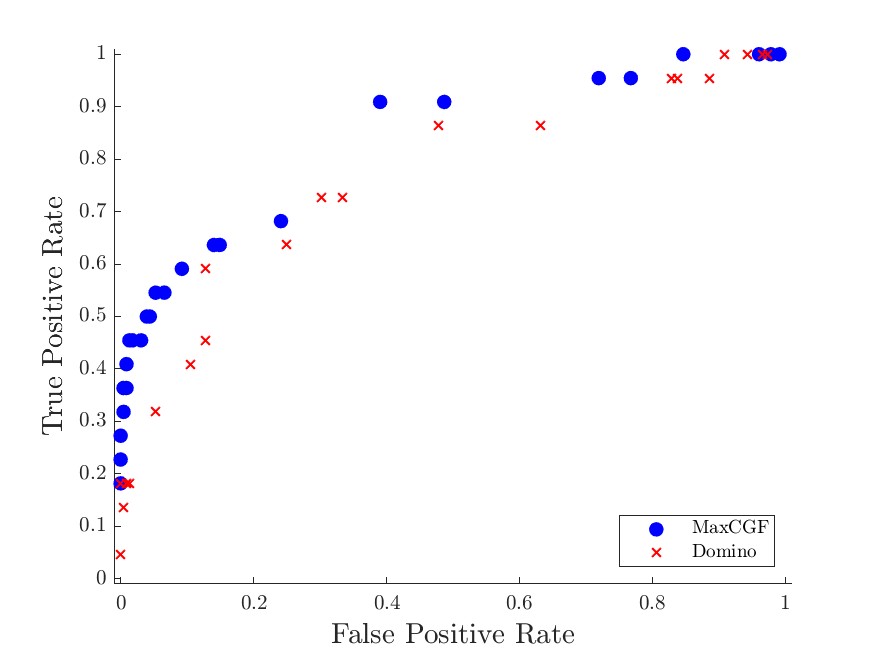}}
	\caption{CAC40 dataset.}
	\label{fig:CAC40}
\end{figure}
\begin{figure}[htbp]
	\centering
	\subfigure[Value of the FTSE100 index. The red line indicates the beginning of the crisis.]{\includegraphics[width=.45\textwidth]{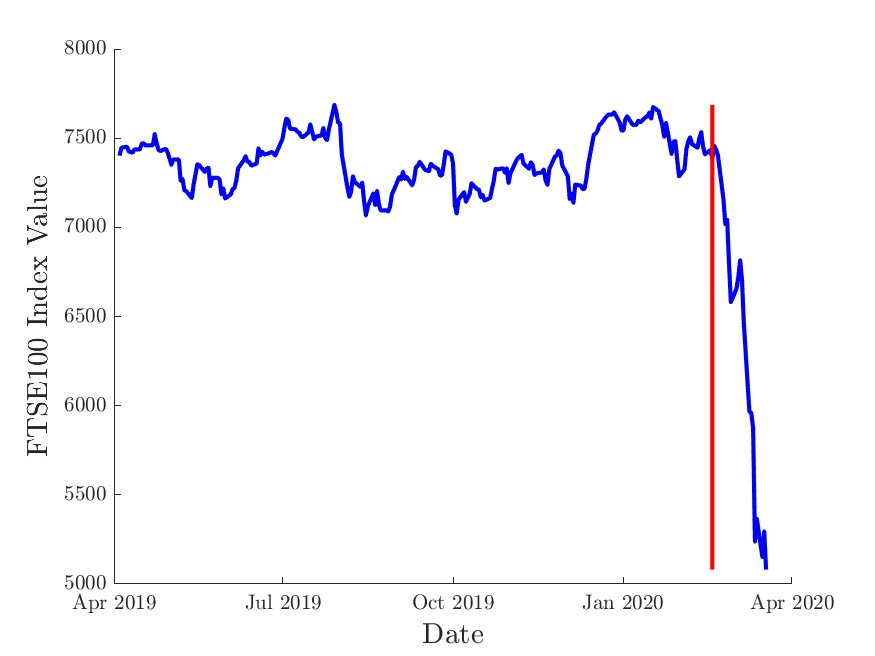}}
\subfigure[ROC curves of the MaxCGF and Domino algorithms for the FTSE100 dataset.]{\includegraphics[width=.45\textwidth]{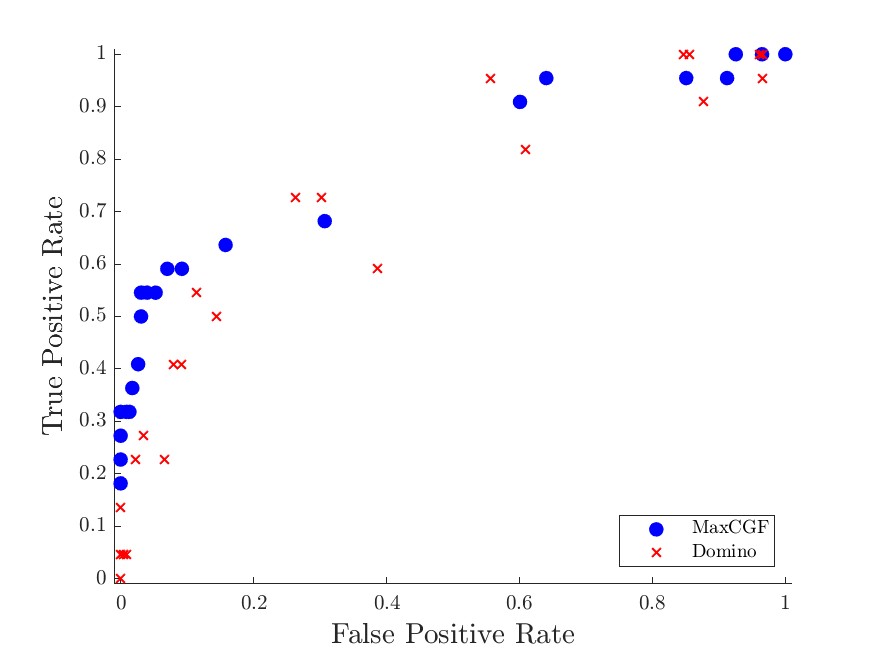}}
	\caption{FTSE100 dataset.}
	\label{fig:FTSE100}
\end{figure}
\begin{figure}[htbp]
	\centering
	\subfigure[Value of the Euro Next 100 index. The red line indicates the beginning of the crisis.]{\includegraphics[width=.45\textwidth]{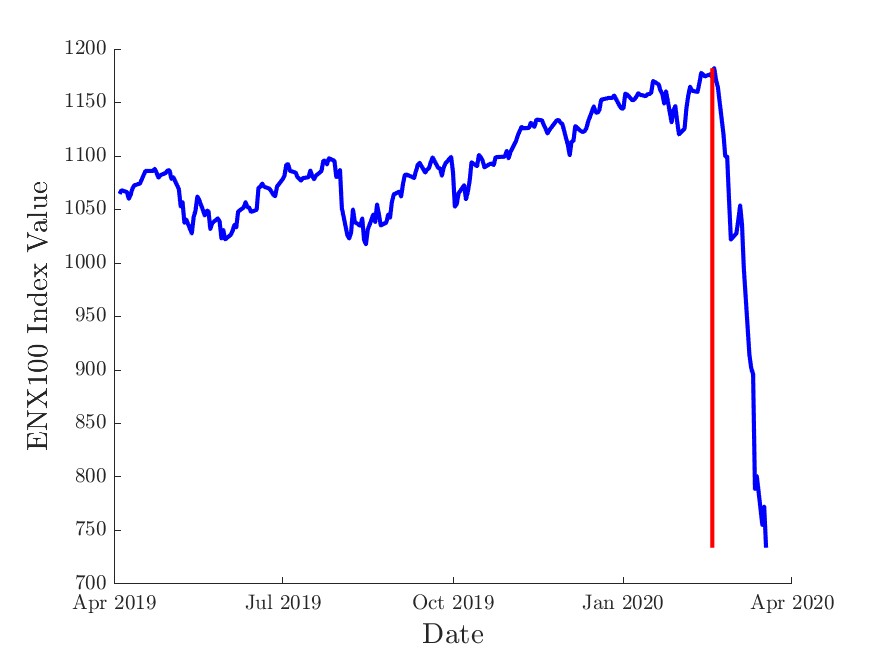}}
\subfigure[ROC curves of the MaxCGF and Domino algorithms for the N100 dataset.]{\includegraphics[width=.45\textwidth]{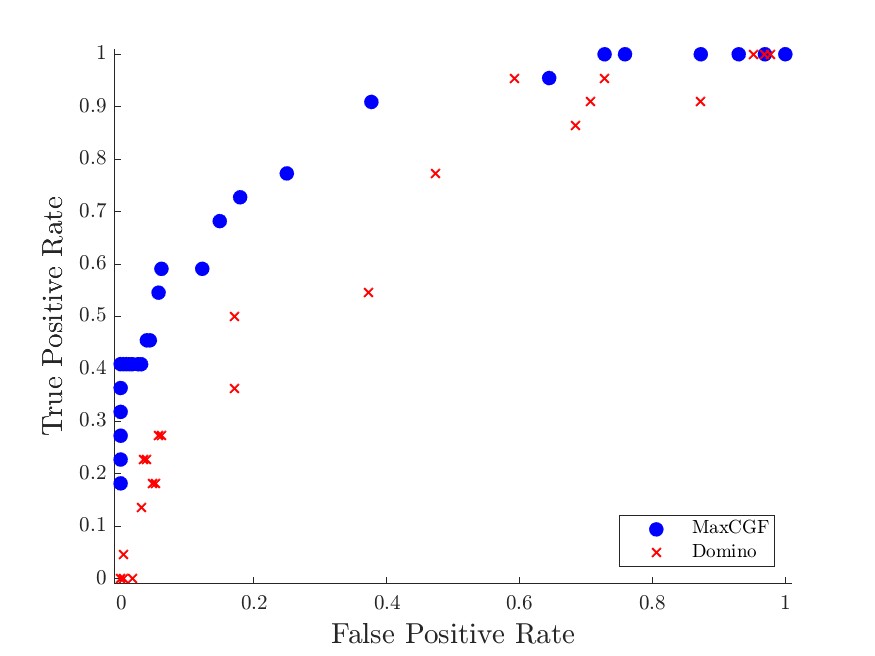}}
	\caption{N100 dataset.}
	\label{fig:ENX}
\end{figure}

\noindent
In Figures \ref{fig:DJIA}-\ref{fig:ENX}, we also provide the ROC curves for all the financial market analyzed.
Interestingly, TPR of the MaxCGF algorithm is high for values of FPR slightly higher than 0.1.
This means that our algorithm is able to correctly detect a large portion of outliers misclassifying only few ordinary data (as outlier) even for high values of the threshold $\beta$.
Furthermore, the MaxCGF method is always the least time-consuming, only taking a few minutes to complete the analysis for all the financial datasets analyzed.
Conversely, both the Domino and the Pe\~na-Prieto algorithms are heavily influenced by the number of assets $n$.
Indeed, on the one hand, the Pe\~na-Prieto procedure is able to produce results only for DJIA. On the other hand, for the Domino approach, the running time ranges from about 17 minutes for DAX ($n=28$) to around two days for the N100 ($n=98$).
Summing up, as highlighted in the tables, the MaxCGF method always shows better results for all the examined performance measures, followed by the Domino method.

\section{Conclusions}\label{sec:conc}

In this paper, we have proposed a non-parametric approach to detect anomalies in data by examining their univariate projections on appropriate directions that depend on cumulants of any order.
Such directions, where the original data have been projected,
are those that maximize the cumulant generating function (CGF).

\noindent
In this respect, we have first refined some theoretical results of \cite{bernacchia2008detecting1} and \cite{bernacchia2008detecting2} investigating
the directions that maximize CGF of data with normal and skew-normal distributions.
Then, we have proved in the general non-parametric case that CGF is a convex function and characterized the CGF maximization problem on the unit $n$-circle as a concave minimization problem.
Furthermore, we have extended the outlier detection methodology based on the projections of multivariate data on the directions obtained by the classical PCA technique
to the directions that maximize CGF.
Finally, we have presented an extensive empirical analysis testing the performance of our outlier detection procedure, named MaxCGF, and comparing it with two other methods, proposed by \cite{domino2020multivariate} and \cite{pena2001multivariate}.
From the computational results, we have observed that the Pe\~na-Prieto method shows high performance for standard normal and normal simulated data, while, in other cases, its ability to detect outliers seems to be low.
The Domino method typically provides intermediate performance, but favorable in the case of Student's t distributed data.
The MaxCGF approach also performs well for normal and Student's random vectors.
However, for skew-normal data or real-world financial data, the MaxCGF method always shows better results for all performance measures examined, including efficiency.

\begin{appendices}
	\section{Relative variance of the CGF estimator}
	\label{RelErr}

The sample estimate of the Cumulant Generating
Function (CGF) of a discrete multivariate variable $\boldsymbol{X}_{t}=(X_{1,t},\ldots, X_{n,t})$ with $t=1,\ldots,T$ is
\begin{equation*}
	G_{\boldsymbol{X}}(\boldsymbol{\xi})=\ln \displaystyle\frac{1}{T} \sum_{t=1}^{T} e^{\boldsymbol{\xi}^{T} \boldsymbol{X}_{t}} = \ln M_{\boldsymbol{X}}(\boldsymbol{\xi}) \, ,
\end{equation*}
where, therefore, $M_{\boldsymbol{X}}(\boldsymbol{\xi})$ denotes the sample estimate of the moment generating function.
The relative variance of the CGF estimator is then defined as
\begin{equation}\label{eq:RelVarCGF_}
  \varepsilon_{G}^2 = \frac{\V[G]}{(\E[G])^2} \, .
\end{equation}
To find an explicit expression for \eqref{eq:RelVarCGF_}, we exploit the Taylor expansion for moments of functions of random variables \citep[see, e.g.,][]{benaroya2005probability}.
More precisely, we use the Taylor expansion around $\mu_M = \E[M]$, namely
\begin{equation}\label{eq:CGF_Taylor}
G = \ln M = \ln \mu_M + \frac{1}{\mu_M} (M - \mu_M) - \frac{1}{2} \frac{1}{\mu_{M}^{2}} (M - \mu_M)^2 + \ldots
\end{equation}
%
%
Using the first order approximation, we obtain
\begin{equation}\label{eq:Exp_CGF_Taylor_First}
\E[G] = \E[\ln M] \simeq \E[ \ln \mu_M + \frac{1}{\mu_M} (M - \mu_M) ] = \ln \mu_M
\end{equation}
%
\begin{equation}\label{eq:Var_CGF_Taylor_First}
\V[G] = \V[\ln M] \simeq \V[ \ln \mu_M + \frac{1}{\mu_M} (M - \mu_M) ] = \frac{1}{\mu_{M}^{2}} \sigma_{M}^{2} \, .
\end{equation}
Thus,
\begin{equation}\label{eq:RelVarCGF_Taylor_First_}
  \varepsilon_{G}^{2} \simeq \frac{\sigma_{M}^{2}}{\mu_{M}^{2} (\ln \mu_M)^2} \, ,
\end{equation}
where $\mu_M = \E[M]$ and $\sigma_{M}^{2} = \V[M]$.

\noindent	
The sample estimate of the moment generating function of a discrete multivariate variable $\boldsymbol{X}_{t}=(X_{1,t},\ldots, X_{n,t})$ with $t=1,\ldots,T$ is
\begin{equation}\label{eq:MGFApp}
  M_{\boldsymbol{X}}(\boldsymbol{\xi})= \displaystyle\frac{1}{T} \sum_{t=1}^{T} e^{\boldsymbol{\xi}^{T} \boldsymbol{X}_{t}} \, ,
\end{equation}
where, as mentioned in Section \ref{sec:CGFMaximum}, to explicitly find \eqref{eq:RelVarCGF_Taylor_First_},
we assume i.i.d. normally distributed random vectors $\boldsymbol{X}_{t}\sim N_{n}(\boldsymbol{0}, \boldsymbol{\Sigma})$ $\forall t =1, \ldots, T$.
Then, the expectation of the estimator \eqref{eq:MGFApp} is
\begin{equation}\label{eq:MGFApp2}
  \E[M]= \displaystyle\frac{1}{T} \sum_{t=1}^{T} \E[e^{\boldsymbol{\xi}^{T} \boldsymbol{X}_{t}}] \, ,
\end{equation}
where $\E[e^{\boldsymbol{\xi}^{T} \boldsymbol{X}_{t}}]=\int\cdots\int e^{\boldsymbol{\xi}\boldsymbol{x}} f_{\boldsymbol{X}_{t}}(\boldsymbol{x})d\boldsymbol{x}$ and $f_{\boldsymbol{X}_{t}}(\boldsymbol{x})=2\pi^{-\frac{n}{2}} \det(\boldsymbol{\Sigma})^{-\frac{1}{2}} e^{-\frac{1}{2}\boldsymbol{x}^{T} \boldsymbol{\Sigma}^{-1}\boldsymbol{x}}$.
Hence,
\begin{eqnarray}
  \E[e^{\boldsymbol{\xi}^{T} \boldsymbol{X}_{t}}] &=& 2 \pi^{-\frac{n}{2}} \det(\boldsymbol{\Sigma})^{-\frac{1}{2}} \int \cdots \int e^{\boldsymbol{\xi}\boldsymbol{x}} e^{-\frac{1}{2} \boldsymbol{x}^{T} \boldsymbol{\Sigma}^{-1} \boldsymbol{x}}d\boldsymbol{x} \nonumber \\
   &=& e^{\frac{1}{2} \boldsymbol{\xi}^{T} \boldsymbol{\Sigma} \boldsymbol{\xi}} \, 2 \pi^{-\frac{n}{2}} \det(\boldsymbol{\Sigma})^{-\frac{1}{2}} \int \cdots \int e^{-\frac{1}{2} (\boldsymbol{x} - \boldsymbol{\Sigma} \boldsymbol{\xi})^{T} \boldsymbol{\Sigma}^{-1} (\boldsymbol{x}-\boldsymbol{\Sigma} \boldsymbol{\xi})} d\boldsymbol{x} \nonumber \\
   &=& e^{\frac{1}{2} \boldsymbol{\xi}^{T} \boldsymbol{\Sigma} \boldsymbol{\xi}} \, , \label{eq:MGFApp3}
\end{eqnarray}
since $2 \pi^{-\frac{n}{2}} \det(\boldsymbol{\Sigma})^{-\frac{1}{2}} \int \cdots \int e^{-\frac{1}{2} (\boldsymbol{x} - \boldsymbol{\Sigma} \boldsymbol{\xi})^{T} \boldsymbol{\Sigma}^{-1} (\boldsymbol{x}-\boldsymbol{\Sigma} \boldsymbol{\xi})} d\boldsymbol{x}=1$.
Thus, using \eqref{eq:MGFApp3} in \eqref{eq:MGFApp2}, we obtain
\begin{equation}\label{eq:MGFApp4}
  \mu_{M} = \E[M]= \displaystyle\frac{1}{T} \, T \, e^{\frac{1}{2} \boldsymbol{\xi}^{T} \boldsymbol{\Sigma} \boldsymbol{\xi}} = e^{\frac{1}{2} \boldsymbol{\xi}^{T} \boldsymbol{\Sigma} \boldsymbol{\xi}}
\end{equation}
For the variance of the estimator \eqref{eq:MGFApp}, we have
\begin{equation}\label{eq:MGFApp5}
  \V[M] =\E[M^2] - \E[M]^{2} \, ,
\end{equation}
where
\begin{eqnarray*}
  \E[M^{2}] &=& \E \biggl[ \frac{1}{T} \sum_{t=1}^{T} e^{\boldsymbol{\xi}^{T} \boldsymbol{X}_{t} } \, \frac{1}{T} \sum_{j=1}^{T} e^{\boldsymbol{\xi}^{T} \boldsymbol{X}_{j} } \biggr] \\
   &=& \displaystyle \frac{1}{T^{2}} \E\biggl[ \sum_{t=1}^{T} e^{ 2 \boldsymbol{\xi}^{T} \boldsymbol{X}_{t}} + \sum_{				
			t \neq j
		}^{T}
		( e^{\boldsymbol{\xi}^{T} \boldsymbol{X}_{t}} )
		(e^{\boldsymbol{\xi}^{T} \boldsymbol{X}_{j}})
		\biggr] \\
   &=&  \displaystyle \frac{1}{T^{2}} \sum_{t=1}^{T} \E\biggl[ e^{ 2 \boldsymbol{\xi}^{T} \boldsymbol{X}_{t}} \biggr] + \sum_{				
			t \neq j
		}^{T}
		\E\biggl[ ( e^{\boldsymbol{\xi}^{T} \boldsymbol{X}_{t}} )
		(e^{\boldsymbol{\xi}^{T} \boldsymbol{X}_{j}})
		\biggr] \, .
\end{eqnarray*}
Now, similarly to \eqref{eq:MGFApp3}, we obtain that
\begin{equation}
	\E [e^{2\boldsymbol{\xi}\boldsymbol{X}_{t}}] = e^{2 \boldsymbol{\xi}^{T} \boldsymbol{\Sigma} \boldsymbol{\xi}} \, .
	\label{eq:mean1}
	\end{equation}
Furthermore, since, by assumption, $\boldsymbol{X}_{t}$ are i.i.d. $\forall t =1, \ldots, T$, we have
\begin{eqnarray}
  \E[M^{2}] &=& \displaystyle \frac{1}{T^{2}} (T \E [ e^{2\boldsymbol{\xi}^{T} \boldsymbol{X}}] +(T^2-T) (\E [e^{\boldsymbol{\xi}^{T} \boldsymbol{X}}])^{2} \nonumber \\
   &=& \frac{1}{T} \E [e^{2 \boldsymbol{\xi}^{T} \boldsymbol{X}}] + (1-\frac{1}{T}) (\E [e^{\boldsymbol{\xi}^{T}\boldsymbol{X}}])^{2} \nonumber \\
   &=& \frac{1}{T} e^{2 \boldsymbol{\xi}^{T} \boldsymbol{\Sigma} \boldsymbol{\xi}} + (1-\frac{1}{T}) e^{\boldsymbol{\xi}^{T} \boldsymbol{\Sigma} \boldsymbol{\xi}} \, . \label{eq:Var1a}
\end{eqnarray}
Substituting \eqref{eq:Var1a} in Expression \eqref{eq:MGFApp5}, we can write
\begin{eqnarray}
  \sigma_{M}^{2} = \V[M]   &=& \frac{1}{T} e^{2 \boldsymbol{\xi}^{T} \boldsymbol{\Sigma} \boldsymbol{\xi}} + (1-\frac{1}{T}) e^{\boldsymbol{\xi}^{T} \boldsymbol{\Sigma} \boldsymbol{\xi}} - e^{\boldsymbol{\xi}^{T} \boldsymbol{\Sigma} \boldsymbol{\xi}} \nonumber \\
   &=& \frac{1}{T} \left( e^{2 \boldsymbol{\xi}^{T} \boldsymbol{\Sigma} \boldsymbol{\xi}} - e^{\boldsymbol{\xi}^{T} \boldsymbol{\Sigma} \boldsymbol{\xi}} \right)  \label{eq:MGFApp4bb}
\end{eqnarray}
Thus, using \eqref{eq:MGFApp4} and \eqref{eq:MGFApp4bb} in \eqref{eq:RelVarCGF_},
\begin{equation}\label{eq:RelVarCGF_Taylor_First}
  \varepsilon_{G}^2 \simeq \frac{\sigma_{M}^{2}}{\mu_{M}^{2} (\ln \mu_M)^2} =
  \frac{\frac{1}{T} \left( e^{2 \boldsymbol{\xi}^{T} \boldsymbol{\Sigma} \boldsymbol{\xi}} - e^{\boldsymbol{\xi}^{T} \boldsymbol{\Sigma} \boldsymbol{\xi}} \right)}{e^{ \boldsymbol{\xi}^{T} \boldsymbol{\Sigma} \boldsymbol{\xi}} ( \frac{1}{2} \boldsymbol{\xi}^{T} \boldsymbol{\Sigma} \boldsymbol{\xi} )^2 } = \frac{1}{T} \frac{e^{ \boldsymbol{\xi}^{T} \boldsymbol{\Sigma} \boldsymbol{\xi}} -1}{( \frac{1}{2} \boldsymbol{\xi}^{T} \boldsymbol{\Sigma} \boldsymbol{\xi} )^2} \, .
\end{equation}
Denoting $\boldsymbol{\xi}=r \boldsymbol{\theta}$,
we obtain
\begin{equation}\label{eq:RelVarCGF_Taylor_First2}
  \varepsilon_{G}^2 \simeq \frac{4}{T} \frac{e^{ r^2 \boldsymbol{\theta}^{T} \boldsymbol{\Sigma} \boldsymbol{\theta}} -1}{r^4 ( \boldsymbol{\theta}^{T} \boldsymbol{\Sigma} \boldsymbol{\theta} )^2} \, .
\end{equation}
Simplifying the issue, we substitute $\boldsymbol{\theta}^{T} \boldsymbol{\Sigma} \boldsymbol{\theta}$ with its largest eigenvalue $\lambda_1$ (computed by the standard PCA), and, therefore, Expression \eqref{eq:RelVarCGF_Taylor_First2} becomes
\begin{equation}\label{eq:RelVarCGF_Taylor_First3}
  \varepsilon_{G}^2 \simeq \frac{4}{T} \frac{e^{ r^2 \lambda_1 } -1}{r^4 ( \lambda_1 )^2} \, .
\end{equation}

\end{appendices}

{\footnotesize
\bibliographystyle{apa}

\begin{thebibliography}{}

\bibitem[\protect\astroncite{An{\'e} et~al.}{2008}]{ane2008robust}
An{\'e}, T., Ureche-Rangau, L., Gambet, J.-B., and Bouverot, J. (2008).
\newblock Robust outlier detection for asia--pacific stock index returns.
\newblock {\em Journal of International Financial Markets, Institutions and
  Money}, 18(4):326--343.

\bibitem[\protect\astroncite{Arellano-Valle and
  Azzalini}{2008}]{arellano2008centred}
Arellano-Valle, R.~B. and Azzalini, A. (2008).
\newblock The centred parametrization for the multivariate skew-normal
  distribution.
\newblock {\em Journal of Multivariate Analysis}, 99(7):1362--1382.

\bibitem[\protect\astroncite{Azzalini and
  Capitanio}{1999}]{azzalini1999statistical}
Azzalini, A. and Capitanio, A. (1999).
\newblock Statistical applications of the multivariate skew normal
  distribution.
\newblock {\em Journal of the Royal Statistical Society: Series B (Statistical
  Methodology)}, 61(3):579--602.

\bibitem[\protect\astroncite{Azzalini and
  Valle}{1996}]{azzalini1996multivariate}
Azzalini, A. and Valle, A.~D. (1996).
\newblock The multivariate skew-normal distribution.
\newblock {\em Biometrika}, 83(4):715--726.

\bibitem[\protect\astroncite{Benaroya et~al.}{2005}]{benaroya2005probability}
Benaroya, H., Han, S.~M., and Nagurka, M. (2005).
\newblock {\em Probability models in engineering and science}, volume 192.
\newblock CRC press.

\bibitem[\protect\astroncite{Benson}{1995}]{benson1995concave}
Benson, H.~P. (1995).
\newblock Concave minimization: theory, applications and algorithms.
\newblock In {\em Handbook of global optimization}, pages 43--148. Springer.

\bibitem[\protect\astroncite{Bernacchia and
  Naveau}{2008}]{bernacchia2008detecting1}
Bernacchia, A. and Naveau, P. (2008).
\newblock Detecting spatial patterns with the cumulant function--part 1: The
  theory.
\newblock {\em Nonlinear Processes in Geophysics}, 15(1):159--167.

\bibitem[\protect\astroncite{Bernacchia
  et~al.}{2008}]{bernacchia2008detecting2}
Bernacchia, A., Naveau, P., Vrac, M., and Yiou, P. (2008).
\newblock Detecting spatial patterns with the cumulant function--part 2: An
  application to el nino.
\newblock {\em Nonlinear Processes in Geophysics}, 15(1):169--177.

\bibitem[\protect\astroncite{Cesarone et~al.}{2020}]{cesarone2020stability}
Cesarone, F., Mango, F., Mottura, C.~D., Ricci, J.~M., and Tardella, F. (2020).
\newblock On the stability of portfolio selection models.
\newblock {\em Journal of Empirical Finance}, 59:210--234.

\bibitem[\protect\astroncite{Copson and Copson}{2004}]{copson2004asymptotic}
Copson, E.~T. and Copson, E.~T. (2004).
\newblock {\em Asymptotic expansions}.
\newblock Number~55. Cambridge university press.

\bibitem[\protect\astroncite{Das and Sinha}{1986}]{das1986detection}
Das, R. and Sinha, B.~K. (1986).
\newblock Detection of multivariate outliers with dispersion slippage in
  elliptically symmetric distributions.
\newblock {\em The Annals of Statistics}, pages 1619--1624.

\bibitem[\protect\astroncite{Domino}{2020}]{domino2020multivariate}
Domino, K. (2020).
\newblock Multivariate cumulants in outlier detection for financial data
  analysis.
\newblock {\em Physica A: Statistical Mechanics and its Applications}, page
  124995.

\bibitem[\protect\astroncite{Donoho}{1982}]{donoho1982breakdown}
Donoho, D.~L. (1982).
\newblock Breakdown properties of multivariate location estimators.
\newblock Technical report, Technical report, Harvard University, Boston. URL
  http://www-stat. stanford~….

\bibitem[\protect\astroncite{Ferguson}{1961}]{ferguson1961rejection}
Ferguson, T.~S. (1961).
\newblock On the rejection of outliers.
\newblock In {\em Proceedings of the fourth Berkeley symposium on mathematical
  statistics and probability}, volume~1, pages 253--287. University of
  California Press Berkeley and Los Angeles.

\bibitem[\protect\astroncite{Giacometti et~al.}{2021}]{giacometti2021tail}
Giacometti, R., Torri, G., and Paterlini, S. (2021).
\newblock Tail risks in large portfolio selection: penalized quantile and
  expectile minimum deviation models.
\newblock {\em Quantitative Finance}, 21(2):243--261.

\bibitem[\protect\astroncite{Gnanadesikan and
  Kettenring}{1972}]{gnanadesikan1972robust}
Gnanadesikan, R. and Kettenring, J.~R. (1972).
\newblock Robust estimates, residuals, and outlier detection with multiresponse
  data.
\newblock {\em Biometrics}, pages 81--124.

\bibitem[\protect\astroncite{Knuth}{2014}]{knuth2014art}
Knuth, D.~E. (2014).
\newblock {\em Art of computer programming, volume 2: Seminumerical
  algorithms}.
\newblock Addison-Wesley Professional.

\bibitem[\protect\astroncite{Kondor et~al.}{2007}]{Kondor2007}
Kondor, I., Pafka, S., and Nagy, G. (2007).
\newblock Noise sensitivity of portfolio selection under various risk measures.
\newblock {\em Journal of Banking \& Finance}, 31(5):1545--1573.

\bibitem[\protect\astroncite{Maronna}{1976}]{maronna1976robust}
Maronna, R.~A. (1976).
\newblock {Robust M-estimators of multivariate location and scatter}.
\newblock {\em The Annals of Statistics}, pages 51--67.

\bibitem[\protect\astroncite{Meng et~al.}{2019}]{meng2019overview}
Meng, F., Yuan, G., Lv, S., Wang, Z., and Xia, S. (2019).
\newblock An overview on trajectory outlier detection.
\newblock {\em Artificial Intelligence Review}, 52(4):2437--2456.

\bibitem[\protect\astroncite{Nocedal and Wright}{1999}]{nocedal1999numerical}
Nocedal, J. and Wright, S.~J. (1999).
\newblock {\em Numerical optimization}.
\newblock Springer.

\bibitem[\protect\astroncite{Pardalos and Rosen}{1986}]{pardalos1986methods}
Pardalos, P.~M. and Rosen, J.~B. (1986).
\newblock Methods for global concave minimization: A bibliographic survey.
\newblock {\em Siam Review}, 28(3):367--379.

\bibitem[\protect\astroncite{Pe{\~n}a and Prieto}{2001}]{pena2001multivariate}
Pe{\~n}a, D. and Prieto, F.~J. (2001).
\newblock Multivariate outlier detection and robust covariance matrix
  estimation.
\newblock {\em Technometrics}, 43(3):286--310.

\bibitem[\protect\astroncite{Pe{\~n}a and Prieto}{2007}]{pena2007combining}
Pe{\~n}a, D. and Prieto, F.~J. (2007).
\newblock Combining random and specific directions for outlier detection and
  robust estimation in high-dimensional multivariate data.
\newblock {\em Journal of Computational and Graphical Statistics},
  16(1):228--254.

\bibitem[\protect\astroncite{Reed and Yu}{1990}]{reed1990adaptive}
Reed, I.~S. and Yu, X. (1990).
\newblock Adaptive multiple-band cfar detection of an optical pattern with
  unknown spectral distribution.
\newblock {\em IEEE transactions on acoustics, speech, and signal processing},
  38(10):1760--1770.

\bibitem[\protect\astroncite{Rockafellar}{1970}]{rockafellar1970convex}
Rockafellar, R.~T. (1970).
\newblock {\em Convex Analysis}, volume~36.
\newblock Princeton University Press.

\bibitem[\protect\astroncite{Rousseeuw}{1984}]{rousseeuw1984least}
Rousseeuw, P.~J. (1984).
\newblock Least median of squares regression.
\newblock {\em Journal of the American Statistical Association},
  79(388):871--880.

\bibitem[\protect\astroncite{Rousseeuw}{1985}]{rousseeuw1985multivariate}
Rousseeuw, P.~J. (1985).
\newblock Multivariate estimation with high breakdown point.
\newblock {\em Mathematical Statistics and Applications}, 8(283-297):37.

\bibitem[\protect\astroncite{Schwager and
  Margolin}{1982}]{schwager1982detection}
Schwager, S.~J. and Margolin, B.~H. (1982).
\newblock Detection of multivariate normal outliers.
\newblock {\em The Annals of Statistics}, 10(3):943--954.

\bibitem[\protect\astroncite{Shu et~al.}{2021}]{shu2021covid}
Shu, M., Song, R., and Zhu, W. (2021).
\newblock {The ``COVID'' crash of the 2020 US Stock market}.
\newblock {\em The North American Journal of Economics and Finance}, 58:101497.

\bibitem[\protect\astroncite{Singh and Upadhyaya}{2012}]{singh2012outlier}
Singh, K. and Upadhyaya, S. (2012).
\newblock Outlier detection: applications and techniques.
\newblock {\em International Journal of Computer Science Issues (IJCSI)},
  9(1):307.

\bibitem[\protect\astroncite{Sinha}{1984}]{sinha1984detection}
Sinha, B.~K. (1984).
\newblock Detection of multivariate outliers in elliptically symmetric
  distributions.
\newblock {\em The Annals of Statistics}, pages 1558--1565.

\bibitem[\protect\astroncite{Stahel}{1981}]{stahel1981breakdown}
Stahel, W.~A. (1981).
\newblock {\em Breakdown of covariance estimators}.
\newblock Fachgruppe f{\"u}r Statistik, Eidgen{\"o}ssische Techn. Hochsch.

\bibitem[\protect\astroncite{Trendafilov and
  Jolliffe}{2006}]{trendafilov2006projected}
Trendafilov, N.~T. and Jolliffe, I.~T. (2006).
\newblock Projected gradient approach to the numerical solution of the
  scotlass.
\newblock {\em Computational Statistics \& Data Analysis}, 50(1):242--253.

\bibitem[\protect\astroncite{Wilks}{1963}]{wilks1963multivariate}
Wilks, S.~S. (1963).
\newblock Multivariate statistical outliers.
\newblock {\em Sankhy{\=a}: The Indian Journal of Statistics, Series A}, pages
  407--426.

\end{thebibliography}

}

\end{document}